\documentclass[pra,aps,showpacs,twoside,twocolumn,superscriptaddress]{revtex4}



\usepackage{amsmath,amsfonts,amssymb,caption,hyperref,color,epsfig,graphics,graphicx,latexsym,mathrsfs,revsymb,theorem,url,verbatim,epstopdf}

\hypersetup{colorlinks,linkcolor={blue},citecolor={blue},urlcolor={red}} 




\usepackage{hyperref}

\theoremstyle{plain}
\newtheorem{definition}{Definition}
\newtheorem{proposition}[definition]{Proposition}
\newtheorem{lemma}{Lemma}

\newtheorem{theorem}{Theorem}
\newtheorem{corollary}{Corollary}
\newtheorem{conjecture}{Conjecture}

\newtheorem{remark}{Remark}
\newtheorem{example}{Example}
\newtheorem{question}[definition]{Question}


\def\squareforqed{\hbox{\rlap{$\sqcap$}$\sqcup$}}
\def\qed{\ifmmode\squareforqed\else{\unskip\nobreak\hfil
\penalty50\hskip1em\null\nobreak\hfil\squareforqed
\parfillskip=0pt\finalhyphendemerits=0\endgraf}\fi}
\def\endenv{\ifmmode\;\else{\unskip\nobreak\hfil
\penalty50\hskip1em\null\nobreak\hfil\;
\parfillskip=0pt\finalhyphendemerits=0\endgraf}\fi}
\newenvironment{proof}{\noindent \textbf{{Proof.~} }}{\qed}
\def\Dbar{\leavevmode\lower.6ex\hbox to 0pt
{\hskip-.23ex\accent"16\hss}D}
\makeatletter
\def\url@leostyle{%
  \@ifundefined{selectfont}{\def\UrlFont{\sf}}{\def\UrlFont{\small\ttfamily}}}
\makeatother
\urlstyle{leo}

\def\bcj{\begin{conjecture}}
\def\ecj{\end{conjecture}}
\def\bcr{\begin{corollary}}
\def\ecr{\end{corollary}}
\def\bd{\begin{definition}}
\def\ed{\end{definition}}
\def\bea{\begin{eqnarray}}
\def\eea{\end{eqnarray}}
\def\bem{\begin{enumerate}}
\def\eem{\end{enumerate}}
\def\bex{\begin{example}}
\def\eex{\end{example}}
\def\bim{\begin{itemize}}
\def\eim{\end{itemize}}
\def\bl{\begin{lemma}}
\def\el{\end{lemma}}
\def\bma{\begin{bmatrix}}
\def\ema{\end{bmatrix}}
\def\bpf{\begin{proof}}
\def\epf{\end{proof}}
\def\bpp{\begin{proposition}}
\def\epp{\end{proposition}}
\def\bqu{\begin{question}}
\def\equ{\end{question}}
\def\br{\begin{remark}}
\def\er{\end{remark}}
\def\bt{\begin{theorem}}
\def\et{\end{theorem}}

\def\btb{\begin{tabular}}
\def\etb{\end{tabular}}

\newcommand{\nc}{\newcommand}



\def\r{\rho}

\def\ph{\varphi}

 \nc{\bbA}{\mathbb{A}} \nc{\bbB}{\mathbb{B}} \nc{\bbC}{\mathbb{C}}
 \nc{\bbD}{\mathbb{D}} \nc{\bbE}{\mathbb{E}} \nc{\bbF}{\mathbb{F}}
 \nc{\bbG}{\mathbb{G}} \nc{\bbH}{\mathbb{H}} \nc{\bbI}{\mathbb{I}}
 \nc{\bbJ}{\mathbb{J}} \nc{\bbK}{\mathbb{K}} \nc{\bbL}{\mathbb{L}}
 \nc{\bbM}{\mathbb{M}} \nc{\bbN}{\mathbb{N}} \nc{\bbO}{\mathbb{O}}
 \nc{\bbP}{\mathbb{P}} \nc{\bbQ}{\mathbb{Q}} \nc{\bbR}{\mathbb{R}}
 \nc{\bbS}{\mathbb{S}} \nc{\bbT}{\mathbb{T}} \nc{\bbU}{\mathbb{U}}
 \nc{\bbV}{\mathbb{V}} \nc{\bbW}{\mathbb{W}} \nc{\bbX}{\mathbb{X}}
 \nc{\bbZ}{\mathbb{Z}}


 \nc{\bA}{{\bf A}} \nc{\bB}{{\bf B}} \nc{\bC}{{\bf C}}
 \nc{\bD}{{\bf D}} \nc{\bE}{{\bf E}} \nc{\bF}{{\bf F}}
 \nc{\bG}{{\bf G}} \nc{\bH}{{\bf H}} \nc{\bI}{{\bf I}}
 \nc{\bJ}{{\bf J}} \nc{\bK}{{\bf K}} \nc{\bL}{{\bf L}}
 \nc{\bM}{{\bf M}} \nc{\bN}{{\bf N}} \nc{\bO}{{\bf O}}
 \nc{\bP}{{\bf P}} \nc{\bQ}{{\bf Q}} \nc{\bR}{{\bf R}}
 \nc{\bS}{{\bf S}} \nc{\bT}{{\bf T}} \nc{\bU}{{\bf U}}
 \nc{\bV}{{\bf V}} \nc{\bW}{{\bf W}} \nc{\bX}{{\bf X}}
 \nc{\bZ}{{\bf Z}}


\nc{\cA}{{\cal A}} \nc{\cB}{{\cal B}} \nc{\cC}{{\cal C}}
\nc{\cD}{{\cal D}} \nc{\cE}{{\cal E}} \nc{\cF}{{\cal F}}
\nc{\cG}{{\cal G}} \nc{\cH}{{\cal H}} \nc{\cI}{{\cal I}}
\nc{\cJ}{{\cal J}} \nc{\cK}{{\cal K}} \nc{\cL}{{\cal L}}
\nc{\cM}{{\cal M}} \nc{\cN}{{\cal N}} \nc{\cO}{{\cal O}}
\nc{\cP}{{\cal P}} \nc{\cQ}{{\cal Q}} \nc{\cR}{{\cal R}}
\nc{\cS}{{\cal S}} \nc{\cT}{{\cal T}} \nc{\cU}{{\cal U}}
\nc{\cV}{{\cal V}} \nc{\cW}{{\cal W}} \nc{\cX}{{\cal X}}
\nc{\cZ}{{\cal Z}}


\nc{\hA}{{\hat{A}}} \nc{\hB}{{\hat{B}}} \nc{\hC}{{\hat{C}}}
\nc{\hD}{{\hat{D}}} \nc{\hE}{{\hat{E}}} \nc{\hF}{{\hat{F}}}
\nc{\hG}{{\hat{G}}} \nc{\hH}{{\hat{H}}} \nc{\hI}{{\hat{I}}}
\nc{\hJ}{{\hat{J}}} \nc{\hK}{{\hat{K}}} \nc{\hL}{{\hat{L}}}
\nc{\hM}{{\hat{M}}} \nc{\hN}{{\hat{N}}} \nc{\hO}{{\hat{O}}}
\nc{\hP}{{\hat{P}}} \nc{\hR}{{\hat{R}}} \nc{\hS}{{\hat{S}}}
\nc{\hT}{{\hat{T}}} \nc{\hU}{{\hat{U}}} \nc{\hV}{{\hat{V}}}
\nc{\hW}{{\hat{W}}} \nc{\hX}{{\hat{X}}} \nc{\hZ}{{\hat{Z}}}

\nc{\hn}{{\hat{n}}}





























\def\max{\mathop{\rm max}}
\def\min{\mathop{\rm min}}





\def\tr{\mathop{\rm Tr}}



\newcommand{\bra}[1]{\langle#1|}
\newcommand{\ket}[1]{|#1\rangle}
\newcommand{\proj}[1]{| #1\rangle\!\langle #1 |}


















\def\Dbar{\leavevmode\lower.6ex\hbox to 0pt
{\hskip-.23ex\accent"16\hss}D}


\begin{document}
\title{Multi-linear Monogamy Relations for Multi-Qubit States}

\author{Xian Shi}\email[]
{shixian01@gmail.com}
\affiliation{School of Mathematics and Systems Science, Beihang University, Beijing 100191, China}

\author{Lin Chen}\email[]{linchen@buaa.edu.cn (corresponding author)}
\affiliation{School of Mathematics and Systems Science, Beihang University, Beijing 100191, China}
\affiliation{International Research Institute for Multidisciplinary Science, Beihang University, Beijing 100191, China}

%
\author{Mengyao Hu}\email[]
{mengyaohu@buaa.edu.cn}
\affiliation{School of Mathematics and Systems Science, Beihang University, Beijing 100191, China}



\date{\today}

\pacs{03.65.Ud, 03.67.Mn}

\begin{abstract}
\indent The monogamy of entanglement means that entanglement cannot be freely shared. 
In 2014, Oliveira $et$ $al.$ [ Oliveira $et$ $al.$, Phys. Rev. A. \textbf{89}, 034303 (2014)] proposed a monogamy relation in the linear version and considered it in terms of entanglement of formation. Here we generalize the above version and consider a multi-linear monogamy relation for a multi-qubit system in terms of entanglement of formation and concurrence. Based on the above results, we present an entanglement criterion for genuine entangled states, also we consider the absolutely maximally entangled states and present what an absolutely maximally entangled state is for a three-qubit system. At last, we apply our results to a three-qubit pure state in terms of quantum discord.
	
\end{abstract}


\maketitle 


\section{Introduction}
\label{sec:int}
\indent Quantum entanglement is an essential feature of quantum mechanics. It plays an important role in quantum information and quantum computation theory \cite{horodecki2009quantum}, such as superdense coding \cite{bennett1992communication}, teleportation \cite{bennett1993teleporting} and the speedup of quantum algorithms \cite{shimoni2005entangled}. \\
\indent As a property of multipartite entanglement, monogamy of entanglement presents that entanglement cannot be shared arbitrarily among many parties, which is different from classical correlations \cite{terhal2004entanglement}.  This property has been applied on many areas in quantum information.  It can be applied to prove the security of quantum cryptography \cite{Pawlowski2010Security,Yang2016Measurement,tomamichel2013monogamy} and the bound of the regularization of its Holevo information for arbitrary channels \cite{gao2018heralded}. It can also be applied to distinguish inequivalent classes of pure states in a tripartite system \cite{coffman2000distributed,yu2009monogamy}. Recently, the authors showed there exists restrictions of indistinguishability for entangled systems due to monogamy relations \cite{karczewski2018monogamy}.\\
		\indent Mathematically, for a tripartite system with parties $A,B$ and $C,$ the general monogamy in terms of an entanglement measure $\mathcal{E}$ implies that the entanglement between $A$ and $BC$ satisfies
		\begin{align}\label{mr}
		\mathcal{E}_{A|BC}\ge \mathcal{E}_{AB}+\mathcal{E}_{AC},
		\end{align}
		here $\mathcal{E}_{AB}$ and $\mathcal{E}_{AC}$ means the entanglement between the parties $A,B$ and $A,C.$
		This relation was first proved for qubit systems in terms of the 2-tangle  \cite{coffman2000distributed,osborne2006general}.  Bai $et$ $al.$ showed that the inequality Eq. $(1)$ is valid in terms of the squared entanglement of formation (EoF) for $n$-qubit systems \cite{bai2014general}. Zhu $et$ $al.$  investigated the monogamy relations related to the concurrence and the entanglement of formation \cite{zhu2014entanglement}. Recently, the authors in \cite{zhu2015,san2016generalized} presented generalized monogamy relations, Jin $et$ $al.$ proposed tighter monogamy relations for $n$-qubit systems \cite{jin2018tighter}. Yu $et$ $al.$ utilized the conversion relation between the coherence and
		the entanglement to establish the monogamy inequalities
		for high-dimensional coherence-induced entanglement in
		terms of the relative entropy of entanglement and the negativity \cite{yu2019monogamy}. Zhang $et$ $al.$ studied the monogamy relations for  multi-qubit quantum systems in product norm \cite{zhang2020note}. \\
		\indent However, it is well known that the EoF ($E$) does not satisfy the inequality Eq. (\ref{mr}). In 2014, Oliveira $et$ $al.$ proposed a linear monogamy relation in terms of EoF  and numerically obtained the bound for a three-qubit system. This result indicates that entanglement cannot be freely shared in terms of EoF \cite{de2014monogamy}. In 2015, Liu $et$ $al.$ proved this bound analytically. There they also computed the bound of linear monogamy relation in terms of concurrence for a three-qubit system \cite{liu2015linear}. Moreover, Cornelio proposed another interesting monogamy relation in terms of the squared concurrence for three-qubit systems \cite{cornelio2013multipartite}. They called the relations multipartite monogamy relations.  \\
		\indent One of the motivations of this paper is to better understand the monogamy relations within the theory of multipartite entanglement. Although the authors in \cite{dur2000three} mentioned a similar function of a three-qubit pure state in terms of some entanglement measure,  there they aimed to investigate the robustness of a three-qubit pure state against loss of a qubit. Here we characterize the distribution of the entanglement for an $n$-qubit system in terms of EoF and concurrence.  In \cite{dur2000three}, the authors only showed the function numerically in terms of EoF and the bound of the function in terms of the squared concurrence among three-qubit pure states.  Crucially, we present multi-linear monogamy relation in terms of entanglement of formation for a three-qubit pure state analytically. We generalize this bound to a three-qubit mixed state in terms of EoF and concurrence. Also we present only the LU-equivalent class of W state can reach the upper bound among three-qubit mixed states. That is, this can be seen to detect whether a three-qubit pure state is W state. Due to the importance of the W state in quantum computation and communication \cite{joo2003quantum,ng2014quantum,yu2014obtaining,vijayan2020robust}, this result is meaningful.\\
		\indent In this work, we consider a multi-linear monogamy relation in terms of EoF and concurrence for a multi-qubit system. We present that the W state is the unique state that can reach the upper bound of multi-linear monogamy relations in terms of concurrene and EoF up to the local unitary transformations (LU). We also present the condition when the states reach the minimum of the multi-linear monogamy relation in terms of concurrence. At last, we present some applications of our results to build an entanglement criterion and consider the absolutely maximally entangled states for a three-qubit system mainly. We also get a similar bound for the discord of three-qubit pure states.\\
		\indent This article is organized as follows. First we review the preliminary knowledge needed. Then we prove our main results. We present multi-linear monogamy relations in terms of EoF and concurrence.  We also present some applications of our results on the entanglement witness. At last, based on the relation between the EoF and the discord, we present a similar result for the sum of all bipartite quantum discord for a three-qubit pure state. 
\section{Preliminaries}
\label{sec:pre}
\indent An $n$-partite pure state $\ket{\psi}_{A_1A_2\cdots A_n}$ is full product if it can be written as
\begin{align}
\ket{\psi}_{A_1A_2\cdots A_n}=\ket{\phi_1}_{A_1}\ket{\phi_2}_{A_2}\cdots\ket{\phi_n}_{A_n},
\end{align}
otherwise, it is entangled. A multipartite pure state is called genuinely entangled if 
\begin{align}
\ket{\psi}_{A_1A_2\cdots A_n}\ne \ket{\phi}_{S}\ket{\varphi}_{\overline{S}},
\end{align}
for any bipartition $S|\overline{S},$ here $S$ is a subset of $\boldsymbol{A}=\{A_1,A_2,\cdots,A_n\}$, $\overline{S}=\boldsymbol{A}-S.$\\
	\indent Assume $\ket{\psi}_{AB}$ is a bipartite pure state. Due to the Schmidt decomposition, $\ket{\psi}_{AB}$ can always be written as $$\ket{\psi}_{AB}=\sum_i \sqrt{\lambda_i}\ket{i}_A\ket{i}_B,$$ here $\lambda_i\ge 0,\sum_i\lambda_i=1,$ $\{\ket{i}_{A(B)}\}$ is an orthonormal basis of the Hilbert space $A(B).$  First we recall the EoF. The EoF of $\ket{\psi}_{AB}$ is given by
		\begin{align}\label{Ep}
		E(\ket{\psi}_{AB})=S(\rho_A)=-\sum \lambda_i\log\lambda_i,
		\end{align} 
		here $\lambda_i$ are the eigenvalues of $\rho_A=\tr_B\ket{\psi}_{AB}\bra{\psi}.$ For a mixed state $\rho_{AB},$ the EoF is defined by the convex roof extension method,
		\begin{align}\label{Em}
		E(\rho_{AB})=\min_{\{p_i,\ket{\phi_i}_{AB}\}}\sum_i p_i E(\ket{\phi_i}_{AB}),
		\end{align} 
		where the minimum is taken over all the decompositions of $\rho_{AB}=\sum_i p_i\ket{\phi_i}_{AB}\bra{\phi_i}$ with $p_i\ge 0$ and $\sum p_i=1.$\\
		\indent The other important entanglement measure is the concurrence ($C$). The concurrence of a pure state $\ket{\psi}_{AB}$ is defined as 
		\begin{align}\label{Cp}
		C(\ket{\psi}_{AB})=\sqrt{2(1-\tr\rho_A^2)}=\sqrt{2(1-\sum_i\lambda_i^2)}.
		\end{align}
		For a mixed state $\rho_{AB},$ it is defined as
		\begin{align}\label{Cm}
			C(\rho_{AB})=\min_{\{p_i,\ket{\phi_i}_{AB}\}}\sum_i p_i C(\ket{\phi_i}_{AB}),
		\end{align}
			where the minimum takes over all the decompositions of $\rho_{AB}=\sum_i p_i\ket{\phi_i}_{AB}\bra{\phi_i}$ with $p_i\ge 0$ and $\sum p_i=1.$\\
			\indent For a two-qubit mixed state $\rho_{AB},$ Wootters derived an analytical formula \cite{wootters1998entanglement}:
			\begin{align}\label{E2}
				E(\rho_{AB})=&h(\frac{1+\sqrt{1-C_{AB}^2}}{2}), \\
				h(x)=&-x\log_2x-(1-x)\log_2(1-x),\label{h}\\
			C_{AB}=&\max\{\sqrt{\mu_1}-\sqrt{\mu_2}-\sqrt{\mu_3}-\sqrt{\mu_4},0\},\label{C2}
			\end{align}
		 here the $\mu_1,\mu_2,\mu_3,\mu_4,$ are the eigenvalues of the matrix $\rho_{AB}(\sigma_y\otimes\sigma_y)\rho_{AB}^{*}(\sigma_y\otimes\sigma_y)$ with nonincreasing order.\\
\section{Main Results}
\label{sec:resultsummary}
\indent For a three-qubit pure state $\ket{\psi}_{ABC}$, the pairwise correlations are described by the reduced density operators $\rho_{AB},\rho_{BC}$ and $\rho_{CA}.$ In 2014, Oliveira $et$ $al.$ \cite{de2014monogamy}  numerically presented the following inequality is valid for a $3$-qubit pure state in terms of EoF and concurrence, 
\begin{align}
E_{A|B}+E_{A|C}\le \lambda, \label{lmoe}
\end{align}
here $\lambda$ is a constant, when $E$ is EoF, they conjectured $\lambda=1.2018$. In 2015, the authors in \cite{liu2015linear} proved the above inequality for a $3$-qubit pure state in terms of EoF analytically, there they denoted the above inequality as the linear monogamy relation.\\
\indent From the (\ref{lmoe}), we find that although the EoF doesnot satisfy  (\ref{mr}) for 3-qubit generic states, the entanglement cannot be freely shared in terms of EoF. Here we mainly consider a new linear monogamy relation which we call it multi-linear monogamy relation. The main difference between ours and the linear monogamy relations is that the left hand side takes over all bipartitions within the multipartite entanglement. For the $3$-qubit states, it means in terms of some entanglement measure $\mathcal{E}$, the following inequality is valid,
\begin{align}
M\mathcal{E}=\mathcal{E}_{A|B}+\mathcal{E}_{A|C}+\mathcal{E}_{B|C}\le\nu.
\end{align}
Here $\nu$ is a constant. We can also generalize the relations to $n$-qubit states $\rho_{A_1A_2\cdots A_n},$ we denote the following inequality in terms of some entanglement measure $\mathcal{E}$ as the multi-linear monogamy relation
\begin{align}
\sum_{i< j}\mathcal{E}_{i|j}\le \eta.
\end{align}
Here $\eta$ is a constant.
\subsection{Multipartite linear monogamy relations in terms of EoF}	
\indent In this subsection, we first present a theorem on the multi-linear monogamy relation in terms of EoF  for a three-qubit pure state.  	
\begin{theorem}\label{meofw}
	For a three-qubit pure state, the W state reaches the upper bound $c_{max}=3h(\frac{1}{2}+\frac{\sqrt{5}}{6})$ of multi-linear monogamy relation in terms of EoF.
	\end{theorem}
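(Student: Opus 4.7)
The plan is to verify by direct computation that $\ket{W}$ saturates the stated value $c_{max}$, making full use of Wootters' formula (\ref{E2})--(\ref{C2}) together with the total permutation symmetry of $\ket{W}$. Because $\ket{W}=\frac{1}{\sqrt{3}}(\ket{001}+\ket{010}+\ket{100})$ is invariant (up to relabeling) under permutations of $A,B,C$, the three reductions $\rho_{AB}$, $\rho_{AC}$, $\rho_{BC}$ are mutually identical, so one bipartite EoF computation determines all three summands $E(\rho_{AB})+E(\rho_{AC})+E(\rho_{BC})$ appearing in the multi-linear monogamy relation. Tracing out the third qubit of $\proj{W}$ I obtain the rank-two state
\begin{equation*}
\rho_{AB}=\tfrac{1}{3}\proj{00}+\tfrac{2}{3}\proj{\psi^+},\qquad \ket{\psi^+}=\tfrac{1}{\sqrt{2}}(\ket{01}+\ket{10}),
\end{equation*}
and it is this simple orthogonal decomposition that will make the remaining computation essentially inspection.

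Next I apply Wootters' formula. Since $\sigma_y\ox\sigma_y$ sends $\ket{00}\mapsto -\ket{11}$ and stabilizes $\ket{\psi^+}$, and since $\rho_{AB}$ is real so $\rho_{AB}^{*}=\rho_{AB}$, the spin-flipped state is $\tilde\rho_{AB}=\tfrac{1}{3}\proj{11}+\tfrac{2}{3}\proj{\psi^+}$. Because $\ket{11}$ is orthogonal to both $\ket{00}$ and $\ket{\psi^+}$, the product $\rho_{AB}\tilde\rho_{AB}$ collapses to $\tfrac{4}{9}\proj{\psi^+}$, whose only nonzero eigenvalue is $4/9$. Formula (\ref{C2}) therefore yields $C(\rho_{AB})=2/3$, and substituting into (\ref{E2}) with $\sqrt{1-(2/3)^2}=\sqrt{5}/3$ gives
\begin{equation*}
E(\rho_{AB})=h\!\left(\tfrac{1}{2}+\tfrac{\sqrt{5}}{6}\right).
\end{equation*}

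Combining the three symmetry-equivalent contributions then produces $E(\rho_{AB})+E(\rho_{AC})+E(\rho_{BC})=3h(\tfrac{1}{2}+\tfrac{\sqrt{5}}{6})=c_{max}$, exactly the value announced in the theorem, which is precisely the statement that $\ket{W}$ reaches the upper bound of the multi-linear monogamy relation. The only step requiring genuine attention is recognizing the rank-two block structure of $\rho_{AB}$ that reduces the Wootters eigenvalue problem to a single product of mutually orthogonal projectors; once that is in hand, everything else is a one-line substitution into (\ref{E2}) followed by a symmetry-driven multiplication by three, and I foresee no further obstacle.
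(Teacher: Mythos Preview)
Your computation is correct but it addresses only the trivial half of the theorem. You have verified that $E(\rho_{AB})+E(\rho_{AC})+E(\rho_{BC})=3h(\tfrac{1}{2}+\tfrac{\sqrt{5}}{6})$ when $\ket{\psi}=\ket{W}$; what you have \emph{not} shown is that this number is an upper bound for \emph{every} three-qubit pure state. The content of Theorem~\ref{meofw} is precisely that $c_{max}$ bounds the sum from above over the whole manifold of pure states and that $\ket{W}$ attains it, not merely that $\ket{W}$ happens to produce this particular value.

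The paper's proof carries out the missing optimization. It writes the EoF sum as $g(x,y)=f(x)+f(y-x)+f(c-y)$ with $f(x)=h(\tfrac{1+\sqrt{1-x}}{2})$ and $x=C_{AB}^2$, $y=C_{AB}^2+C_{AC}^2$, $c=C_{AB}^2+C_{AC}^2+C_{BC}^2$; strict convexity of $f$ forces any maximizer to satisfy $C_{AB}^2=C_{AC}^2=C_{BC}^2$. Then, using the generalized Schmidt decomposition $\ket{\psi}=l_0\ket{000}+l_1e^{i\theta}\ket{100}+l_2\ket{101}+l_3\ket{110}+l_4\ket{111}$ together with the explicit formulas $C_{AB}^2=4l_0^2l_2^2$, $C_{AC}^2=4l_0^2l_3^2$, $C_{BC}^2=4l_2^2l_3^2+4l_1^2l_4^2-8l_1l_2l_3l_4\cos\theta$, a Lagrange-multiplier computation under the equal-concurrence constraint yields $l_0=l_2=l_3=\tfrac{1}{\sqrt{3}}$, $l_1=l_4=0$, which is LU-equivalent to $\ket{W}$. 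Without this global-optimization step your argument leaves open the possibility that some other pure state gives a strictly larger sum, so as written the proposal does not prove the theorem.
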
	
	The proof of Theorem \ref{meofw} is in the APPENDIX \ref{A}.\\
\indent We can extend this result to the mixed state $\rho_{ABC}$. Assume that $\{s_h,\ket{\phi_h}_{ABC}\}$ is a decomposition of $\rho_{ABC},$ then we have
		\begin{align}
		\hspace{2mm} &E(\rho_{AB})+E(\rho_{AC})+E(\rho_{BC})\nonumber\\
		=&\sum_i p_iE(\ket{\phi_i}_{AB})+\sum_j q_jE(\ket{\theta_j}_{AC})+\sum_k r_k E(\ket{\zeta_k}_{BC})\nonumber\\
		\le& \sum_h s_h(E(\rho_{AB}^h)+E(\rho_{AC}^h)+E(\rho_{BC}^h))\nonumber\\
		\le &\sum_h s_h\times c_{max}=c_{max}.\label{mm}
		\end{align}
		Here we assume that in the first equality, $\{p_i,\ket{\phi_i}\},$ $\{q_j,\ket{\theta_j}\}$ and $\{r_k,\ket{\zeta_k}\}$ are the optimal decompositions of $\rho_{AB},$ $\rho_{AC}$ and $\rho_{BC}$ in terms of the EoF correspondingly. The first equality is due to the definition of the EoF for the mixed states, the second inequality is due to the equality (\ref{mp}). In the first inequality,  we denote  $\tr_C\ket{\phi^{h}}\bra{\phi^h}=\rho_{AB}^h,\tr_B\ket{\phi^{h}}\bra{\phi^h}=\rho_{AC}^h,\tr_A\ket{\phi^{h}}\bra{\phi^h}=\rho_{BC}^h,$ \\
	
				\indent   For a three-qubit pure state, D$\ddot{u}$r $et$ $al.$ \cite{dur2000three} showed that there are two inequivalent kinds of genuinely entangled states, $i.e.$ the W-class states and the Greenberger-Horne-Zeilinger (GHZ)-class states.  The W-class states $\ket{\psi}$ are all LU equivalent to the following states:
		\begin{align}
           \ket{\phi}=r_0\ket{000}+r_1\ket{001}+r_2\ket{010}+r_3\ket{100},\label{LUW}
		\end{align}
		where $r_1,r_2,r_3>0,$ and $\sum_{i=0}^3|r_i|^2=1.$ From simple computation, we have $C^2(\rho_{AB})=4|r_2r_3|^2,C^2(\rho_{AC})=4|r_1r_3|^2,C^2(\rho_{BC})=4|r_1r_2|^2.$ We see that the function $E(\rho_{AB})+E(\rho_{AC})+E(\rho_{BC})$ ranges over $(0,c_{max}]$ for the W class states. When $\ket{\psi}=\frac{\ket{000}+\ket{111}}{\sqrt{2}},$ $E(\rho_{AB})+E(\rho_{AC})+E(\rho_{BC})=0,$ and as the GHZ class states is dense \cite{acin2001classification}, the function $E(\rho_{AB})+E(\rho_{AC})+E(\rho_{BC})$ ranges over $[0,c_{max})$. 

		\subsection{Multipartite linear monogamy relations in terms of concurrence}
		\indent In this subsection, we present a theorem on the multi-linear monogamy relation in terms of the concurrence  for a three-qubit pure state $\ket{\psi}_{ABC}$.  
		 \begin{lemma}\label{mulc}
			Up to the local unitary transformations, the W state is the unique state that can reach the upper bound in terms of the function $MC(\psi)=C_{AB}+C_{BC}+C_{AC}$ for a three-qubit pure state.\\
		\end{lemma}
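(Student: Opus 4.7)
The plan is to split the argument along the Dur--Vidal--Cirac classification of three-qubit pure states into product, biseparable, W-class, and GHZ-class representatives (up to SLOCC), and handle each case separately. Fully separable states trivially give $sC(\psi)=0$, and for a biseparable state in the cut $A|BC$ we have $C_{AB}=C_{AC}=0$ and $C_{BC}\le 1$, so $sC\le 1<2$. Only the W-class and the GHZ-class remain.

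For the W-class I would use the canonical form (\ref{LUW}), $\ket{\phi}=r_0\ket{000}+r_1\ket{001}+r_2\ket{010}+r_3\ket{100}$ with $\sum_i|r_i|^2=1$, together with the concurrence formulas $C(\rho_{AB})=2|r_2r_3|$, $C(\rho_{AC})=2|r_1r_3|$, $C(\rho_{BC})=2|r_1r_2|$ recorded in the previous subsection. Then
\[
sC(\phi)=2\bigl(|r_1r_2|+|r_1r_3|+|r_2r_3|\bigr).
\]
Setting $a=|r_1|$, $b=|r_2|$, $c=|r_3|$, the standard symmetric inequality $ab+bc+ca\le a^2+b^2+c^2$ (which follows from $2(a^2+b^2+c^2-ab-bc-ca)=(a-b)^2+(b-c)^2+(c-a)^2$) combined with the normalization $a^2+b^2+c^2\le 1$ gives $sC(\phi)\le 2$. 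Equality in the first forces $a=b=c$ and in the second $r_0=0$, so $|r_i|^2=1/3$ for $i=1,2,3$. Absorbing the remaining phases into local unitaries on the three qubits yields exactly $\ket{W}$, establishing uniqueness within the W-class.

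For the GHZ-class the task is to establish the strict inequality $sC<2$. I would combine the CKW identity $C_{AB}^2+C_{AC}^2+\tau_{ABC}=4\det\rho_A$ with its cyclic companions and then apply Cauchy--Schwarz, which produces
\[
sC^2\le 3\sum_{X<Y}C_{XY}^2 = 6\sum_{X}\det\rho_X-\frac{9}{2}\tau_{ABC}.
\]
The main obstacle is to push this estimate below $4$ whenever $\tau_{ABC}>0$, i.e.\ to establish $\sum_X\det\rho_X\le \frac{2}{3}+\frac{3}{4}\tau_{ABC}$ with strict inequality off the W-class. I expect to verify this by passing to the Acin canonical form and computing $\det\rho_A+\det\rho_B+\det\rho_C$ directly in terms of the five real SLOCC parameters; alternatively, a continuity argument can be used, since $sC$ is continuous on the compact space of local-unitary orbits, the W-class lies in the closure of the GHZ-class, and the W-class bound is already saturated only at $\ket{W}$, so no GHZ-class state with $\tau_{ABC}>0$ can achieve $sC=2$. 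Uniqueness in the lemma then follows by combining this strict GHZ-class bound with the W-class uniqueness established above.
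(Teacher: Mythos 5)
Your case split matches the paper's, and your W-class argument is actually tighter than the paper's: using $|r_1r_2|+|r_1r_3|+|r_2r_3|\le |r_1|^2+|r_2|^2+|r_3|^2\le 1$ with the two equality conditions cleanly forces $r_0=0$ and $|r_1|=|r_2|=|r_3|=1/\sqrt 3$, where the paper merely asserts the maximizer ``trivially.'' The product and biseparable cases are also fine. The problem is the GHZ class, which is where the real content of the uniqueness claim lives, and neither of your two proposed routes closes it. Route (1) reduces, after the CKW substitution $\sum_{X<Y}C_{XY}^2=2\sum_X\det\rho_X-\tfrac32\tau_{ABC}$, to the inequality $\sum_X\det\rho_X\le\tfrac23+\tfrac34\tau_{ABC}$, which is literally equivalent to $C_{AB}^2+C_{AC}^2+C_{BC}^2\le\tfrac43$ --- a statement essentially as strong as the one you are trying to prove (it is saturated exactly at $\ket{W}$), and you only say you ``expect to verify'' it in the Acin canonical form. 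Moreover, even granting it, Cauchy--Schwarz equality would give $C_{AB}=C_{AC}=C_{BC}=2/3$, and you would still owe an argument that no GHZ-class state has this property.

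Route (2), the continuity argument, is a non sequitur. That the W class lies in the closure of the GHZ class and that the bound is saturated only at $\ket{W}$ \emph{within the W class} says nothing about whether some GHZ-class state independently attains $sC=2$; density and continuity control the supremum over the closure, not attainment inside the open stratum. What is actually needed --- and what the paper supplies in its Case iv --- is an explicit computation on the GHZ class itself: writing $\ket\psi=M_1\ox M_2\ox M_3\ket{GHZ}$ with invertible local $M_i$, expressing $sC(\psi)=(|c_1s_2s_3|+|c_2s_1s_3|+|c_3s_1s_2|)/(r+c_1c_2c_3)$ with $r\ge1$, and showing this is strictly below $2$, the value $2$ being approached only in the limit where the $M_i$ degenerate (i.e., where one leaves the GHZ class). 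You need to either carry out that computation, or actually prove the $\tfrac43$ inequality with strictness for $\tau_{ABC}>0$; as written, the GHZ case is asserted, not proved.
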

	\indent We place the proof of the lemma \ref{mulc} in the APPENDIX \ref{B}

\indent By the similar method we present under the Theorem \ref{meofw}, we can also extend the above results on the mixed states.\\ \indent Next we present an example on the multi-linear monogamy relation in terms of concurrence for a three-qubit mixed state.
\begin{example}
	$$\rho=p_1\ket{W}\bra{W}+p_2\ket{\overline{W}}\bra{\overline{W}}.$$ Here we denote that $\ket{\overline{W}}=\frac{1}{3}(\ket{110}+\ket{101}+\ket{011}).$
	\end{example}

	Through simple computation,
	\begin{align*}
	\rho_{AB}=&\rho_{AC}=\rho_{BC}\\
\rho_{AB}=&	\frac{p_1}{3}\ket{00\bra{00}}+\frac{1}{3}(\ket{01}+\ket{10})(\bra{10}+\bra{01})+\frac{p_2}{3}\ket{11}\bra{11},
	\end{align*} we have  $$C(\rho_{AB})=C(\rho_{AC})=C(\rho_{BC})=\frac{2-2\sqrt{p_1p_2}}{3},$$ if $p_i>0,i=1,2,$ we have $MC(\rho_{ABC})<2.$\\

\indent The Lemma \ref{mulc} can be generalized to the three-qubit mixed states.
\begin{theorem}\label{mulcm}
	Up to the local unitary transformations, the W state is the unique state that can reach the upper bound in terms of the function $MC(\cdot)$ for a three-qubit mixed state.\\
\end{theorem}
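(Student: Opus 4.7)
My plan is to bootstrap Lemma \ref{mulc} from pure to mixed states by characterizing the equality case in the already-derived chain $sC(\rho)\le 2$.

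First, I would assume $\rho_{ABC}$ is a three-qubit state attaining $sC(\rho)=2$ and pick any pure-state decomposition $\rho=\sum_h s_h\proj{\psi^h}$. Replaying the bounding argument,
\begin{align*}
C(\rho_{AB})+C(\rho_{AC})+C(\rho_{BC}) &\le \sum_h s_h\bigl(C(\rho^h_{AB})+C(\rho^h_{AC})+C(\rho^h_{BC})\bigr) \\
&\le \sum_h s_h\cdot 2 = 2,
\end{align*}
both inequalities must collapse to equalities. The second equality forces $sC(\ket{\psi^h})=2$ for every $h$ with $s_h>0$, and Lemma \ref{mulc} then says each such $\ket{\psi^h}$ is LU-equivalent to $\ket{W}$.

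Second, I would upgrade this from the particular decomposition to the entire support of $\rho$ via the Hughston--Jozsa--Wootters theorem: any normalized vector in the range of $\rho$ is realized as a pure-state element in some decomposition, since the map $\ket{u}\mapsto\rho^{1/2}\ket{u}$ sweeps out the whole range as $\ket{u}$ varies. Hence every ray in $\supp(\rho)$ represents a state LU-equivalent to $\ket{W}$.

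Third, I would establish the geometric claim that such a subspace must be one-dimensional. Supposing otherwise, pick orthogonal LU-$\ket{W}$ states $\ket{\psi_1},\ket{\psi_2}\in\supp(\rho)$ and consider the two-parameter family
\[ \ket{\phi(t,\theta)}=\cos t\,\ket{\psi_1}+e^{i\theta}\sin t\,\ket{\psi_2}. \]
Each $\ket{\phi(t,\theta)}$ must have every single-qubit reduced-state spectrum equal to $\{2/3,1/3\}$, all three two-qubit concurrences equal to $2/3$, and vanishing 3-tangle. These are low-degree polynomial invariants in $(\cos t,\sin t,e^{\pm i\theta})$; requiring them to be constant over the full $(t,\theta)$ range translates, after matching coefficients, into rigid algebraic constraints on the cross terms $\tr_{BC}\ket{\psi_1}\bra{\psi_2}$, $\tr_{AC}\ket{\psi_1}\bra{\psi_2}$, $\tr_{AB}\ket{\psi_1}\bra{\psi_2}$ that are inconsistent with $\braket{\psi_1}{\psi_2}=0$ together with both $\ket{\psi_1},\ket{\psi_2}$ being LU-$\ket{W}$.

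The hard part is this last step: carrying out the algebraic check that the LU-$\ket{W}$ invariants cannot all stay fixed on a full projective line of superpositions. Once that subspace claim is in hand, the range of $\rho$ is one-dimensional, so $\rho$ is pure, and Lemma \ref{mulc} identifies it up to local unitaries with $\ket{W}\bra{W}$, completing the proof.
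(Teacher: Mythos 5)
Your first two steps are sound and in fact tighten the paper's own argument: forcing equality in the convexity chain shows that \emph{every} pure-state decomposition of $\rho$ consists of states with $sC=2$, hence (by Lemma \ref{mulc}) of states LU-equivalent to $\ket{W}$, and the Hughston--Jozsa--Wootters parametrization correctly promotes this to every unit vector in the range of $\rho$. (The paper instead fixes a two-element decomposition $p_1\proj{W}+p_2\,(U_1\ox U_2\ox U_3)\proj{W}(U_1\ox U_2\ox U_3)^\dg$ and asserts the general case is similar, so your reduction is cleaner on this point.) The problem is your third step, which is where the entire content of the theorem now lives and which you have not proved. You assert that the LU-$\ket{W}$ invariants (marginal spectra $\{2/3,1/3\}$, all pairwise concurrences $2/3$, vanishing $3$-tangle) cannot be simultaneously constant on a full projective line inside $\supp(\rho)$, but you give no computation, do not identify which coefficient-matching condition actually fails, and do not rule out that some particular orthogonal pair $\ket{\psi_1},\ket{\psi_2}$ makes all the constraints consistent. ``Rigid algebraic constraints that are inconsistent'' is a conjecture, not an argument; as written you have only established that $\supp(\rho)$ is spanned by LU-$\ket{W}$ vectors, which does not by itself force $\rho$ to be pure.

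By contrast, the paper sidesteps the subspace question entirely: it bounds each two-qubit marginal by convexity, $C(\sigma_1)\le \frac{2}{3}\,C\bigl(p_1\proj{\phi^+}+p_2(U_1\ox U_2)\proj{\phi^+}(U_1\ox U_2)^\dg\bigr)$ with $\ket{\phi^+}=\tfrac{1}{\sqrt 2}(\ket{01}+\ket{10})$, and then uses rank and eigenvalue lemmas on the Wootters matrix (Lemmas \ref{SD}, \ref{rank1}, \ref{rank2}) to show this two-qubit mixture has concurrence $1$ only when $(U_1\ox U_2)\ket{\phi^+}=e^{ix}\ket{\phi^+}$, i.e.\ only when the mixture degenerates to a pure state. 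To complete your route you would need to actually carry out the invariant computation---for instance, expand $\rho_A(t,\theta)$ for $\ket{\phi(t,\theta)}$ and show that constancy of $\tr\rho_A^2$ in $\theta$ forces the cross terms $\tr_{BC}\ketbra{\psi_1}{\psi_2}$ to vanish, then derive a contradiction with the concurrence conditions---or else fall back on the paper's direct two-qubit analysis. Until that step is supplied, the proof is incomplete.
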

\indent The proof of Theorem \ref{mulcm} is placed in the APPENDIX \ref{C}.\\
\indent Next we present a necessary and sufficient condition when the function $MC(\cdot)$ attains the minimum 0.
\begin{theorem}\label{6}
	Assume $\ket{\psi}$ is a three-qubit pure state, then $MC(\ket{\psi})=0$ if and only if $\ket{\psi}$ can be represented as $\ket{\psi}=r_0\ket{000}+r_1\ket{111}$ up to local unitary operations when $0\le r_0,r_1\le 1.$
 \end{theorem}
\indent The proof of Theorem \ref{6} is placed in the APPENDIX \ref{D}.\\
\begin{theorem}\label{t5}
	Up to the local unitary transformations, the W state is the unique state that can reach the upper bound in terms of the function $E(\rho_{AB})+E(\rho_{AC})+E(\rho_{BC})$ for a three-qubit mixed state.
\end{theorem}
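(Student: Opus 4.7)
The plan is to mirror the argument used for Theorem \ref{mulcm}, now with EoF in place of concurrence and the pure-state bound $c_{max}=3h(1/2+\sqrt{5}/6)$ from Theorem \ref{meofw} in place of $2$. For any decomposition $\rho_{ABC}=\sum_h s_h\ket{\psi^h}\bra{\psi^h}$ with induced two-qubit marginals $\rho_{AB}^h$, $\rho_{AC}^h$, $\rho_{BC}^h$, the chain already displayed in (\ref{mm}) gives
\begin{align*}
E(\rho_{AB})+E(\rho_{AC})+E(\rho_{BC})&\le\sum_h s_h\bigl(E(\rho_{AB}^h)+E(\rho_{AC}^h)+E(\rho_{BC}^h)\bigr)\\
&\le c_{max},
\end{align*}
where the first step uses convexity of EoF via (\ref{Em}) and the second uses Theorem \ref{meofw}. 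A $\rho_{ABC}$ that saturates $c_{max}$ must make both inequalities equalities.

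Equality in the second inequality forces every pure component with $s_h>0$ to attain $c_{max}$ individually, so the uniqueness clause of Theorem \ref{meofw} identifies each $\ket{\psi^h}$ with an element of the LU orbit of $\ket{W}$. Equality in the first inequality, split into the three separate convex-roof inequalities $E(\rho_{XY})\le\sum_h s_h E(\rho_{XY}^h)$ for $XY\in\{AB,AC,BC\}$, further forces each induced collection $\{s_h,\rho_{XY}^h\}$ to be an optimal convex-roof decomposition of the corresponding marginal in the sense of (\ref{Em}).

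It remains to show that $\rho_{ABC}$ must actually be pure. Taking the spectral decomposition so that the $\ket{\psi^h}$ are mutually orthogonal and parametrizing each eigenvector as $(U_A^h\otimes U_B^h\otimes U_C^h)\ket{W}$ via (\ref{LUW}), a rank $\ge 2$ assumption would make $\rho_{AB}$, $\rho_{AC}$, $\rho_{BC}$ nontrivial mixtures of W-class reductions sitting in differing local frames. Combining the monotone link $E=h\bigl((1+\sqrt{1-C^{2}})/2\bigr)$ from (\ref{E2}) with strict concavity of $h$ in (\ref{h}), the misalignment must produce a strict drop $E(\rho_{XY})<\sum_h s_h E(\rho_{XY}^h)$ on at least one of the three pairs, contradicting the preceding paragraph, so the rank collapses to one and $\rho_{ABC}$ is itself LU-equivalent to $\ket{W}$. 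The main obstacle is precisely establishing this strict inequality: the cleanest route is to transport the concurrence-level uniqueness already proved in Appendix \ref{C} across the monotone bijection $C\mapsto E$ on two-qubit states, then exploit strict concavity of $h$ to convert any frame misalignment into the required gap on at least one marginal.
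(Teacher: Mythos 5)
Your first two reductions are sound and essentially reproduce the chain the paper itself displays in (\ref{mm}): saturation of $c_{max}$ forces every pure component of a decomposition to attain $c_{max}$ individually, and hence (granting that the Lagrange-multiplier analysis in Appendix~\ref{A} pins down the critical point uniquely -- something Theorem~\ref{meofw} as stated does not explicitly assert) each component is LU-equivalent to $\ket{W}$. The genuine gap is the step you yourself flag at the end: you never prove that a rank-$\ge 2$ mixture of LU-inequivalent copies of $\ket{W}$ suffers a strict drop $E(\rho_{XY})<\sum_h s_h E(\rho_{XY}^h)$ on at least one pair. Appealing to ``strict concavity of $h$'' cannot deliver this on its own, because $h$ is evaluated at the concurrence of the marginal, which is itself a convex-roof quantity; concavity of the outer function says nothing about how mixing misaligned local frames changes the inner one. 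What actually closes the argument is the concurrence-level result of Appendix~\ref{C} (Lemmas~\ref{SD}--\ref{rank2}): every two-qubit marginal of such a mixture has $C\le 2/3$, with strict inequality on some pair unless all local frames coincide, and since $E$ is a strictly increasing function of $C$ by (\ref{E2}), this transports directly to EoF. Your closing sentence names exactly this repair, but as written it is a plan rather than a proof.

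For comparison, the paper does not run your equality analysis at all: it disposes of the theorem in two lines by invoking the monotonicity and convexity of $E$ viewed as a function of $y=C^2$ together with its Lemma~\ref{Ce}, thereby reducing the EoF statement wholesale to the already-established concurrence statement, Theorem~\ref{mulcm}. Your route is more self-contained and makes the equality conditions explicit, which is a virtue, but it is incomplete at the decisive purity step; the paper's route is far shorter, though it rests entirely on the functional relation between $E$ and $C^2$ and on a comparison lemma whose hypotheses and proof deserve independent scrutiny before you rely on them.
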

\par Theorem \ref{t5} can be proved in a similar process with the proof of theorem \ref{mulcm}.\\
\indent Next we consider the multi-linear monogamy relations for the $n$-qubit W-class states. These states were first proposed by \cite{Kim2008Entanglement} in order to study the monogamy relations in terms of convex roof extended negativity for higher dimensional systems.
\begin{example}
	$$\ket{\phi}_{A_1A_2\cdots A_n}=\sqrt{p}\ket{GW}_n+\sqrt{1-p}\ket{0}_n.$$
	Here we assume $\ket{GW}=a_1\ket{10\cdots 0}+a_2\ket{010\cdots}+\cdots+a_n\ket{00\cdots 1},$ $\sum_i |a_i|^2=1.$
\end{example}\label{7}
\indent Through simple computation, we have $C(\rho_{A_1A_i})=2p|a_1a_i|$. \begin{align}\label{scgw}
MC(\ket{\phi})=&2p\sum_{i<j}|a_ia_j|\nonumber\\
=&p[(\sum_{i}|a_i|)^2-\sum_i|a_i|^2]\nonumber\\
=&p[(\sum_{i}|a_i|)^2-1].
\end{align}
By the method of Lagrange multiplier, we see when $a_i=\frac{1}{\sqrt{n}},p=1$, that is, when $\ket{\phi}=\ket{W},$ the value in (\ref{scgw}) attains the maximum.\\
\indent In \cite{koashi2000entangled}, the authors presented that for an $n$-qubit symmetric pure state $\ket{\phi}$, the maximal value between any pair of qubits in terms of concurrence is $\frac{2}{n},$ and when $\ket{\phi}=\ket{W}$, it attains the maximum.   Then we may propose a conjecture.

\begin{conjecture}
\label{cj:nqubitMES=w}
	 For an $n$-qubit genuinely entangled pure state $\ket{\phi}$, the maximum $MC(\phi)$ is attained when $\ket{\phi}=\ket{W}.$
\end{conjecture}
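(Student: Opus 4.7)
The plan is to combine the CKW monogamy inequality with a reduction to the symmetric subspace, and then invoke the result of \cite{koashi2000entangled} already cited in the paper. As a preliminary, CKW applied at each single qubit gives
\begin{align*}
\sum_{j\neq i} C^2(\rho_{A_iA_j}) \;\le\; C^2(A_i|\overline{A_i}) \;=\; 4\det\rho_{A_i}\;\le\;1,
\end{align*}
which combined with Cauchy--Schwarz yields $\sum_{j\neq i} C(\rho_{A_iA_j}) \le \sqrt{n-1}\,C(A_i|\overline{A_i})$; summing on $i$ gives the crude bound $sC(\phi)\le \tfrac{n\sqrt{n-1}}{2}$. This overshoots the target value $n-1$ that is attained at $\ket{W}$, so it only serves as a sanity check on the order of magnitude and confirms that any actual proof must exploit the state structure more carefully than row-wise monogamy alone allows.

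The conceptual step is to reduce to the totally symmetric subspace. Since $sC$ is invariant under the $S_n$-action permuting the qubits and $\ket{W}$ itself is symmetric, one expects the maximum to be attained on a symmetric pure state. Once inside the symmetric subspace, all two-qubit reductions $\rho_{A_iA_j}$ coincide, so $sC(\phi)=\binom{n}{2}\,C(\rho_{A_1A_2})$, and the bound $C(\rho_{A_1A_2})\le 2/n$ of \cite{koashi2000entangled}, together with its uniqueness at $\ket{W}$ up to LU, would immediately give $sC(\phi)\le n-1$ and characterize the maximizer as required.

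As a base case and warm-up, on the explicit generalized W-class of Example~7 the closed form
\begin{align*}
sC(\ket{\phi}) \;=\; p\Bigl[\Bigl(\sum_{i=1}^n |a_i|\Bigr)^2-1\Bigr]
\end{align*}
is monotone increasing in $p\in[0,1]$, and under $\sum_i|a_i|^2=1$ a Lagrange-multiplier computation gives the unique maximizer $p=1$, $|a_1|=\cdots=|a_n|=1/\sqrt{n}$, namely $\ket{W}$. This confirms the conjecture on the generalized W-class and provides a natural inductive anchor for any continuity or SLOCC-stratification argument.

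The principal obstacle is the reduction to symmetric states. Because $sC$ is a nonlinear functional of the state and $S_n$-twirling sends a generic pure state to a mixed one, there is no direct ``symmetrize and win'' manoeuvre; an inequality comparing $sC(\phi)$ to $sC$ of an associated symmetric pure state is needed, and no such tool is currently available. For $n\ge 4$ the SLOCC classification contains continuous families of inequivalent pure states, so the exhaustive case analysis used for $n=3$ in Lemma~\ref{mulc} does not extend either. A plausible route is a critical-point analysis of $sC$ viewed as a smooth function on the projective pure-state manifold, using Wootters' explicit amplitude formula for the pairwise concurrence together with the CKW saturation conditions to rule out non-symmetric critical points with $sC>n-1$; carrying this out is precisely the gap that keeps the statement a conjecture.
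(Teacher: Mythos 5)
This statement is left as a conjecture in the paper: the authors offer no proof, only two pieces of supporting evidence, namely the explicit computation $sC(\ket{\phi})=p[(\sum_i|a_i|)^2-1]$ on the generalized W class (maximized at $p=1$, $|a_i|=1/\sqrt{n}$) and the citation of \cite{koashi2000entangled} for the bound $C(\rho_{A_1A_2})\le 2/n$ on symmetric states. Your proposal reproduces exactly this evidence and is honest that it does not constitute a proof, so you and the paper are in the same place; your additional CKW/Cauchy--Schwarz bound $sC(\phi)\le \tfrac{n}{2}\sqrt{n-1}$ is a correct but strictly weaker rigorous statement that the paper does not contain, and it is a genuine (if modest) addition since it at least bounds $sC$ unconditionally.

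The gap you name is the real one, and it is worth being precise about why the most tempting closure fails. Symmetrizing by averaging over the $S_n$-orbit replaces each two-qubit marginal $\rho_{A_iA_j}$ by the mean of all permuted marginals, and concurrence is \emph{convex} under mixing, so $C$ of the twirled marginal is at most the average of the $C(\rho_{A_iA_j})$; the inequality therefore runs in the wrong direction and cannot show that a symmetric state does at least as well as $\ket{\phi}$. Likewise, purifying within the symmetric subspace is not available for a generic non-symmetric $\ket{\phi}$. So the reduction to the symmetric subspace is not a technical nuisance but the entire content of the conjecture, and your proposed critical-point analysis on the pure-state manifold would have to contend with the continuous SLOCC families for $n\ge 4$, exactly as you say. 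One further caution: even granting the reduction, you would need the uniqueness (not just the value $2/n$) from \cite{koashi2000entangled} to hold within the symmetric subspace \emph{and} an argument that no non-symmetric state ties the bound, since the paper's subsequent proposition claims uniqueness of the maximizer up to local unitaries.
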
\label{c1}
\begin{remark}Under the Conjecture \ref{cj:nqubitMES=w}, we can generalize the above results to $n$-qubit mixed states.\\
\indent First, we prove that when $\ket{\phi}_{A_1A_2\cdots A_n}$ is an $n$-qubit pure state , the maximum of $MC(\phi)$ is attained when $\ket{\phi}=\ket{W},$ that is, $\max_{\phi}MC(\ket{\phi})=n-1.$

	If $\ket{\phi}$ is not genuinely entangled,  we can always assume that $\ket{\phi}_{A_1A_2\cdots A_n}$ is biseparable, $i.e.$ $\ket{\phi}_{A_1A_2\cdots A_n}=\ket{\theta_1}_{A_1A_2\cdots A_m}\ket{\theta_2}_{A_{m+1}A_{m+2}\cdots A_n},$ here $\ket{\theta_i},$ $i=1,2$ are genuinely entangled. As $\ket{\phi}$ is biseparable, 
	\begin{align}
	&MC(\ket{\phi})\nonumber\\
	=&MC(\ket{\theta_1})+MC(\ket{\theta_2})\nonumber\\
	\le& m-1+n-m-1\nonumber\\
	=&n-2\nonumber\\
	<&n-1.
	\end{align}

\indent Then by a similar proof of Theorem \ref{mulcm} and the statement above, we can get the results on mixed states: when $\rho_{A_1A_2\cdots A_n}$ is an $n$-qubit mixed state, up to the local unitary transformations,   the W state is the unique state that can reach the upper bound in terms of the function $\sum_{i<j}C_{ij}$.
\end{remark}

\indent Then we pick $10^5$ four qubit and five qubit pure states randomly and compute their $MC(\cdot)$,  these results may verify the Conjecture \ref{cj:nqubitMES=w} numerically.

\begin{figure}
	\centering
	\includegraphics[width=100mm]{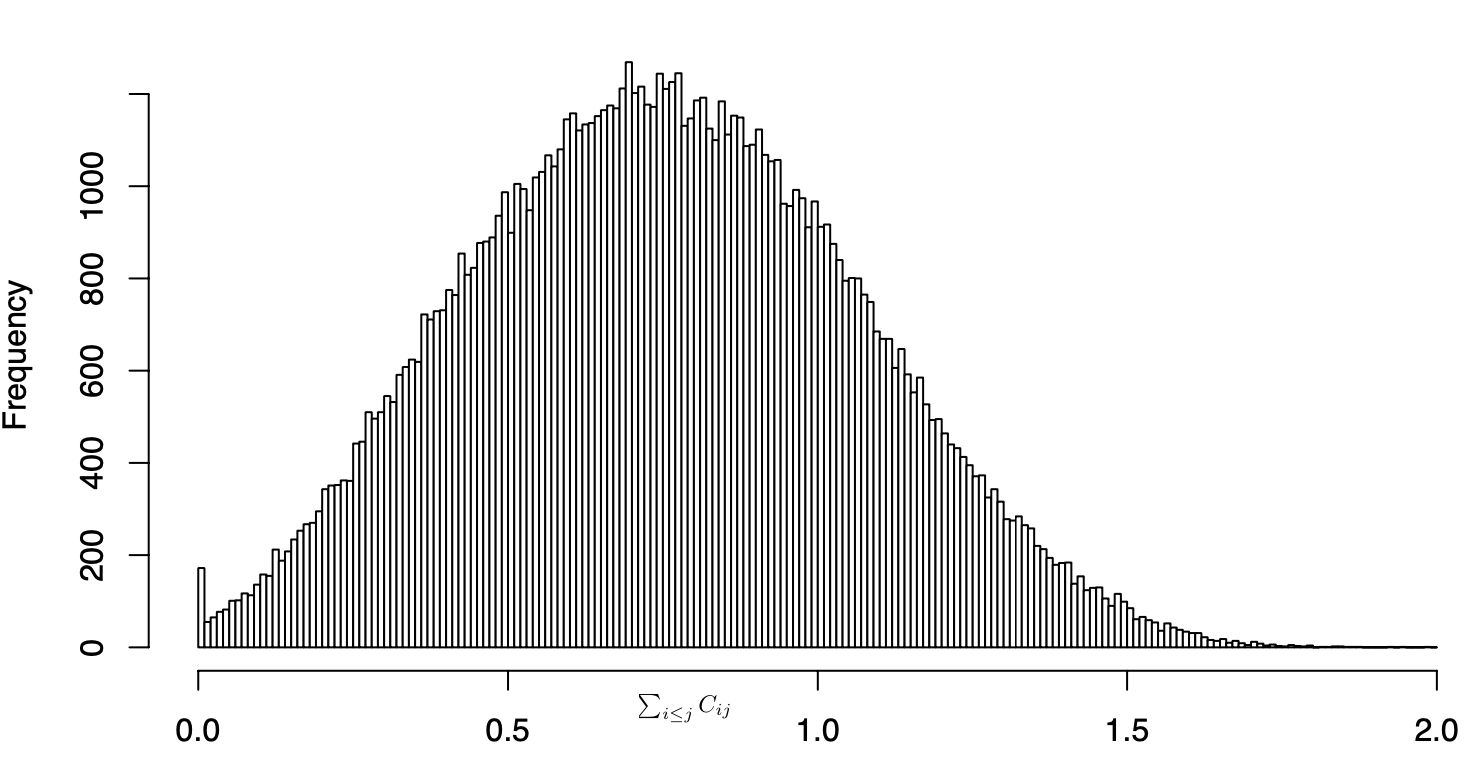}\\
	\caption{In this figure,  we present the frequency of the function $\sum_{i\le j}C_{ij}$ for random pure states of four qubit states.}\label{fig1}
\end{figure}
In Fig. \ref{fig1}, we present a histogram of the value of $\sum_{i\le j}C_{ij}$ for random pure states of four qubits sampled uniformly. Here we find the function $MC(\cdot)$ mainly distributes in the section $[0,1.8],$ and in the section [1.8,2], there are few states. Fig. \ref{fig1} supports Conjecture \ref{cj:nqubitMES=w} for $n=4$.\\
\begin{figure}
	\centering
	\includegraphics[width=90mm]{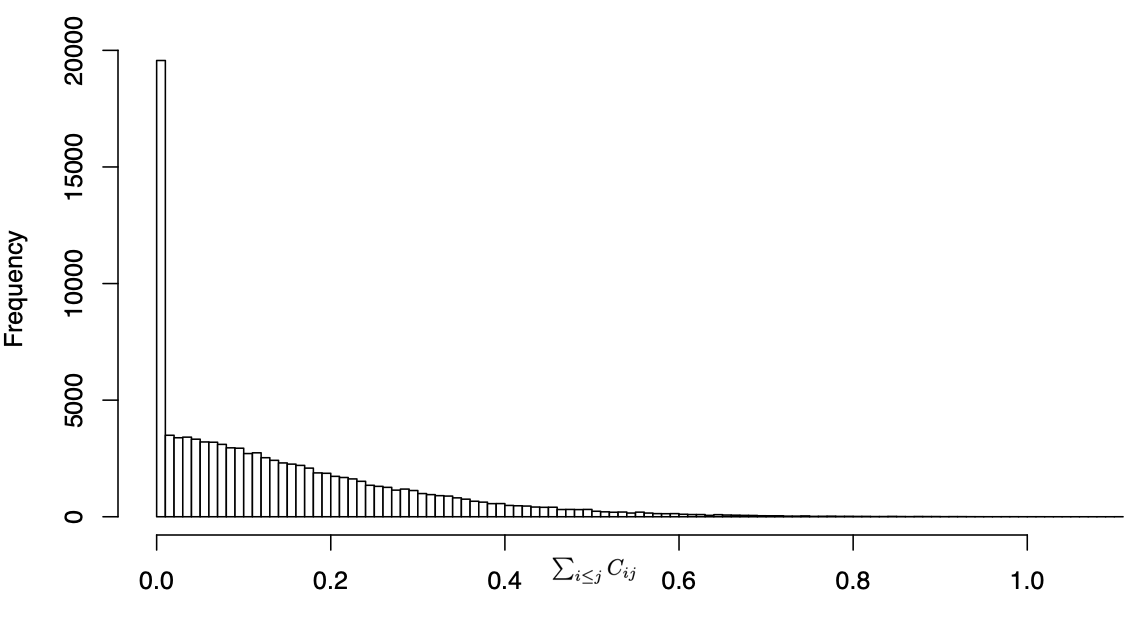}\\
	\caption{In this figure,  we present the frequency of the function $\sum_{i\le j}C_{ij}$ for random pure states of five qubit states.}\label{fig2}
\end{figure}
\indent In Fig. \ref{fig2}, we present a histogram of the value of $\sum_{i\le j}C_{ij}$ for random pure states of five qubits sampled uniformly. From the figure, we have that the sum of $MC(\cdot)$ mainly distributes in the section $[0,0.6]$.  In \cite{bai2007multipartite}, the authors considered the multipartite correlations in four-qubit pure states. Here through the Fig \ref{fig2}, we have that the quantity of the bipartite correlations of most five-qubit pure states is few, then it seems that comparing with the separable states, the set of the entangled states for five-qubit pure states are bigger.
\par
At the last of this section, we consider the $MC(\cdot)$ for a class of pure states in a system with more qubits  studied in [34]. They are useful kinds of entanglement states for quantum teleportation and error correction,
\begin{align*}
\ket{\psi}=a\ket{GHZ}_m\ket{W}_n+b\ket{W}_m\ket{GHZ}_n.
\end{align*}
Here $|a|^2+|b|^2=1,$ and $m,n\ge 2$. Due to the shape of $\ket{\psi}$, we have the set of the bipartite reduced density matrices for the pure state $\ket{\psi}$ consist of three kinds:
\begin{widetext}
 \begin{align}
 \rho^1=&\frac{|a|^2}{2}(\ket{00}\bra{00}+\ket{11}\bra{11})+\frac{|b|^2}{m}[(m-2)\ket{00}\bra{00}+(\ket{01}+\ket{10})(\bra{01}+\bra{10})],\nonumber\\
 \rho^2=&\frac{|b|^2}{2}(\ket{00}\bra{00}+\ket{11}\bra{11})
 +\frac{|a|^2}{n}[(n-2)\ket{00}\bra{00}+(\ket{01}+\ket{10})(\bra{01}+\bra{10})],\nonumber\\
 \rho^3=&(\frac{a}{\sqrt{2n}}\ket{01}+\frac{b}{\sqrt{2m}}\ket{10})(\frac{\overline{a}}{\sqrt{2n}}\bra{01}+\frac{\overline{b}}{\sqrt{2m}}\bra{10})+[\frac{(n-1)|a|^2}{2n}+\frac{(m-1)|b|^2}{2m}]\ket{00}\bra{00}+\frac{(n-1)|a|^2}{2n}\ket{10}\bra{10}\nonumber\\
 +&\frac{|a|^2}{2n}\ket{11}\bra{11}+\frac{(m-1)|b|^2}{2m}\ket{01}\bra{01}+\frac{|b|^2}{2m}\ket{11}\bra{11},
 \end{align}
 \end{widetext}
 then we have 
 \begin{align}
 &C(\rho^1)\nonumber\\
 =&\max\{0,\frac{2|b|^2}{m}-2\sqrt{\frac{|a|^2}{2}(\frac{|a|^2}{2}+\frac{m-2}{m}|b|^2)}\}\nonumber\\=&\begin{cases}
 0& |a|^2\in(g(m),1],\label{w1}\\
 \frac{2|b|^2}{m}-2\sqrt{\frac{|a|^2}{2}(\frac{|a|^2}{2}+\frac{m-2}{m}|b|^2)}& |a|^2\in[0,g(m)] 
 \end{cases}\\
 &C(\rho^2)\nonumber\\=
 &\max\{0,\frac{2|a|^2}{n}-2\sqrt{\frac{|b|^2}{2}(\frac{|b|^2}{2}+\frac{n-2}{n}|a|^2)}\}\nonumber\\=&\begin{cases}
 0& |a|^2\in[0,h(n)]\\
 \frac{2|a|^2}{n}-2\sqrt{\frac{|b|^2}{2}(\frac{|b|^2}{2}+\frac{n-2}{n}|a|^2)}& |a|^2\in[h(n),1]
 \end{cases}\label{w2}\\
 &C(\rho^3)=0.
 \end{align}
 Here $g(m)=\frac{m^2-2m+4-m\sqrt{m^2-4m+8}}{(m-2)^2},h(n)=\frac{n\sqrt{n^2-4n+8}-2n}{(n-2)^2}$.\par
 Let $x=|a|^2$, $f(x)=MC(\ket{\psi})$.
First we will compute the maximum when  $(m,n)\in\{(2,2),(2,3),(3,2),(3,3),(2,4),(4,2)\}$.\par
When $m=n=2,$ we have 
\begin{align}
f(x)=\begin{cases}
1-2x&x\in[0,1/2]\\
2x-1& x\in[1/2,1]\\
\end{cases},
\end{align}
then the maximum is 1, when $x=0$ or $1.$\par
When $m=2,n=3$, we have 
\begin{align}
f(x)=\begin{cases}
1-2x& x\in(0,1/2)\\
0&x\in [1/2,3\sqrt{5}-6]\\
{2x}-\sqrt{3x^2-12x+9}& x\in(3\sqrt{5}-6,1],
\end{cases}
\end{align}
then we have when $x=1$, $f(x)$ gets the maximum $2$.\par
When $m=3, n=2$, we have
\begin{align}
f(x)=\begin{cases}
2(1-x)-\sqrt{3x(x+2)}&x\in [0,7-3\sqrt{5}]\\
0& x\in [7-3\sqrt{5},\frac{1}{2}]\\
2x-1& x\in [1/2,1],
\end{cases}
\end{align}
when $x=0$, $f(x)$ gets the maximum 2.\par
When $m=n=3,$ we have
\begin{align}
f(x)=\begin{cases}
3(1-x)-\sqrt{3{x(x+2)}}&x\in [0,7-3\sqrt{5}]\\
0&x\in (7-3\sqrt{5}, 3\sqrt{5}-6)\\
2x-\sqrt{3x^2-12x+9}& x\in(3\sqrt{5}-6,1],
\end{cases}
\end{align}
when $x=0$ or $x=1,$ $f(x)$ gets the maximum, $2.$\\
\indent When $m=4,n=2$, we have
\begin{align}
f(x)=\begin{cases}
3-3x-6\sqrt{x}& x\in[0,3-2\sqrt{2}]\\
0& x\in [3-2\sqrt{2},1/2]\\
2x-1& x\in [1/2,1],
\end{cases}
\end{align}
when $x=0$, $f(x)$ gets the maximum, $3.$\\
\indent When $m=2,$ $n=4$, we have
\begin{align}
f(x)=\begin{cases}
1-2x& x\in [0,1.2]\\
0&x\in(1/2,2\sqrt{2}-2)\\
3x-12\sqrt{\frac{1}{4}-\frac{x}{4}}&x\in [2\sqrt{2}-2,1]
\end{cases},
\end{align}
then $\max_{x\in[0,1]}f(x)=3.$\\
\indent Next
from simple computation, we have $\forall m,n\ge 3,$ $h(n)\ge g(m),$
then 
\begin{align}
&f(x)\nonumber\\
=&\begin{cases}
(m-1)(1-x)-2\sqrt{\frac{4-m}{4m}x^2+\frac{m-2}{2m}x}&x\in [0,g(m)]\\
0& x\in[g(m),h(n)]\\
(n-1)x-n(n-1)\sqrt{\frac{1}{4}-\frac{x}{n}+\frac{4-n}{4n}x^2}& x\in (g(n),1)
\end{cases}
\end{align}
when $x\in[0,g(m)],m\ge 4,$ $f(x)$ is monotone decreasing, then when $x=0,$ $f(x)=m-1,$ when $x\in (g(n),1)$ $f(x)$ is monotone increasing,  then $\max_{x\in(g(n,1))}f(x)=n-1,$ that is, when $m\ge 4,$ $f(x)=\max(m-1,n-1)$. \par
In the next section, we present some applications of our results.
\section{application}
\indent The structure of a multipartite entanglement system is complex. In this section, we apply our results above on the genuine entanglement detection. We also make some comments on the abosolutely maximally entangled states (AMES), and we apply Theorem \ref{6} to present when a pure state is AMES in a three-qubit system. 

\subsection{An entanglement criterion for genuine entangled states}
\indent On the other hand, an important problem in entanglement theory is to determine whether a multipartite state is genuinely entangled, biseparable or fully separable. An widely accepted method for attacking the problem is to construct entanglement witnesses (EWs)  \cite{horodecki2009quantum}.  The EW $W$ is an Hermitian operator when $\tr(W\sigma)\ge 0$ for every biseparable state $\sigma$, and $\tr(W\rho)<0$ for some entangled state $\rho$. The EW is a theoretical and experimental method compared with mathematical criteria, such as positive partial transpose (PPT) \cite{peres1996separability} and computable cross norm \cite{rudolph2005further}. For a review we refer readers to the papers  \cite{horodecki2009quantum, guhne2009entanglement}. EWs have been constructed to detect the entanglement of many physically realizable states, such as the GHZ diagonal states \cite{chen2018precise,chen2019hierarchy}, GHZ-like state \cite{zhao2019efficient}, noisy Dicke states \cite{chen2020noise}. In the following we connect EWs to the $MC(\phi)$. 

In practice, we need analyze the change of $MC(\phi)$ of $n$-qubit pure states $\ket{\ph}$ under the white noise. Let  $\r(p)=p\proj{\ph}+(1-p){1\over2^n}I_{2^n}$ with $p\in(0,1)$. By definition and simple computation, one can show that the $MC(\r)$  monotonically increases with $p$. As an example, we assume that $\ket{\ph}$ is the $n$-qubit W state. Hence
	\begin{align}
	\r(p)=p\proj{W_n}+(1-p)\frac{I}{2^n},\label{rp}
	\end{align}
	here we can find that the state $(\ref{rp})$ is symmetric. From computation, we have 
	\begin{widetext}
	\begin{align}
	C(\rho_{i\ne j})=\max(0,\frac{2p}{n}-2\sqrt{\frac{(1-p)(n+3np-8p)}{16n}}),
	\end{align}
	\end{widetext}
	that is, when 
	\begin{align}
	16p^2-n^2(1-p)^2+4np(n-2)(p-1)>0,\label{wge0}
	\end{align}
	\begin{widetext} 
		$$MC(\rho)=(n-1)p-(n-1)\sqrt{n(1-p)(n+3np-8p)}.$$
		\end{widetext} When $n=3$, from the analysis above, we have when $p\ge \frac{\sqrt{155}-5}{8},$ the state $\r(p)$ is a genuinely entangled state. In \cite{chen2020noise}, the authors showed when $p\le \frac{\sqrt{3}}{8+\sqrt{3}},$ the state $\r(p)$ is fully separable, there the authors present an optimal entanglement witness $\widehat{W}$ for the $\r(p)$ when $n=3,$ the witness $\widehat{W}$ is written as
	\begin{align}
	\widehat{W}=&\frac{1}{d}\ket{000}\bra{000}-[\ket{001}(\bra{001}\nonumber\\+&\bra{100})+\ket{010}(\bra{001}+\bra{100})\nonumber\\+&\ket{100}(\bra{001}+\bra{010})]\nonumber\\
	+&d(\ket{011}+\ket{101}+\ket{110})(\bra{011}\nonumber\\+&\bra{101}+\bra{110}).\nonumber
	\end{align} 
Ref. \cite{chen2012estimating} has shown that the state $\r(p)$ is fully separable when $p\le \frac{\sqrt{3}}{8+\sqrt{3}},$  they also showed when $p\in (\frac{\sqrt{3}}{8+\sqrt{3}},0.2095],$ the state $\r(p)$ is biseparable but not fully separable. Here if Conjecture \ref{cj:nqubitMES=w} was true, then $MC(\rho)> n-2,$ implies that $\r(p)$ is a genuinely entangled state. In particular, this is true when $n=3$ by Theorem \ref{t5}. Thus proving Conjecture \ref{cj:nqubitMES=w} is meaningful to the investigation of  entanglement dectection.  
		
	

\subsection{absolutely maximally entangled state}

Next  we investigate an absolutely maximally entangled states (AMES) in $n$-qubit systems. A  pure multipartite entangled state is called AMES if all reduced density operators obtained by tracing out at least half of the particles are maximally mixed \cite{goyeneche2015absolutely}. 
So the function $MC(\cdot)$ of every AMES is zero, though the converse fails because the $n$-qubit non-AMES state may have separable reduced density operators. An example is the GHZ state. Hence, we have constructed a necessary condition by which a multipartite state is an AMES.  Next we consider the AMES in a $3$-qubit system.
\begin{corollary}\label{r1}
 The sole class of AMES $\ket{\psi}$ in an $3$-qubit system are the states that are LU equivalent to $\ket{GHZ}=\frac{1}{\sqrt{2}}(\ket{000}+\ket{111}).$ 
\end{corollary}
We place the proof of the corollary in the APPENDIX \ref{E}. 

	\subsection{An upper bound of the sum of all bipartite quantum discord }
\indent Here we present an upper bound of the sum of all bipartite quantum discord for a three-qubit pure state. The quantum discord was first presented by Henderson and Vedral \cite{henderson2001classical}, Ollivier and Zurek \cite{ollivier2001quantum} independently. Quantum discord is a measure of nonclassical correlation. It is defined as 
\begin{align*}
\delta_{AB}^{\leftarrow}=&I_{AB}-J_{AB}^{\leftarrow}\nonumber\\
=&I_{AB}-\max_{\{\Pi_x^B\}}(S(\rho_A)-\sum_x p_xS(\rho_A^x)),\nonumber
\end{align*}
where the maximum takes over all the positive-operator-valued-measurements $\{\Pi_x^B\}$ performed on the subsystem B, $p_x=\tr{\Pi_x^B\rho_{AB}\Pi_x^B},$ and $\rho_A^x=\tr_B(\Pi_x^B\rho_{AB}\Pi_x^B)/p_x.$
From its definition, it quantifies at least how much a bipartite state of one system is changed on anverage by the measurement of the other system. In the last decade, there are some results suggesting that quantum discord plays an important role in quantum information and computation tasks \cite{dakic2012quantum,gu2012observing,pirandola2014quantum,liu2017resource,liu2020quantum}. Recently, Guo $et$ $al.$ considered the  complete monogamy relation
for multi-party quantum discord \cite{guo2021monogamy}.\par
\indent Next we recall a conservation law for distributed EoF and quantum discord of a three-qubit pure state  \cite{fanchini2011conservation} , 
\begin{align}\label{ced}
E_{AB}+E_{AC}=&\delta_{AB}^{\leftarrow}+\delta_{AC}^{\leftarrow},
\end{align}
The law depends on the Koashi-Winter (KW) relation $E_{AB}+J_{AC}^{\leftarrow}=S_A$ \cite{koashi2004monogamy}. \\
\indent Here we present an upper bound of the sum of all the bipartite discord for a three-qubit pure state, from equation $(\ref{ced})$, we have
\begin{align}
\hspace{2mm}&\delta_{AB}^{\leftarrow}+\delta_{BC}^{\leftarrow}+\delta_{CA}^{\leftarrow}+\delta_{BA}^{\leftarrow}+\delta_{AC}^{\leftarrow}+\delta_{CB}^{\leftarrow}\nonumber\\
=&E_{AB}+E_{AC}+E_{BC}+E_{BA}+E_{CA}+E_{CB}\nonumber\\
\le& 2\times c_{max}=2c_{max},\nonumber\\
\end{align}
\indent Thus we have quantum discord owns a multi-linear monogamy relation for a three-qubit pure state.
\section{Conclusions}
\label{sec:con}
	\indent Here we have mainly considered the shareablility of the entanglement for a multi-qubit state in terms of the EoF. We have presented that up to the local unitary transformations, the W state is the unique that can reach the upper bound of $MC(\cdot)$ for a three-qubit state, these results may tell us that the entanglement cannot be shared freely for a three-qubit system. We have also picked $10^5$ four-qubit and five-qubit pure states randomly and computed their $MC(\cdot)$, which have verified the Conjecture \ref{cj:nqubitMES=w} numerically. Finally, we also have presented some applications of our results. We think the methods here we used can be generalized to consider the upper bound of the multi-linear monogamy relation in terms of other bipartite entanglement measures such as, R$\acute{e}$nyi entanglement for an $n$-qubit pure state.	 We believe that our results are helpful on the study of monogamy relations for multipartite entanglement systems.\\
\section*{Acknowledgments}
Authors were supported by the NNSF of China (Grant No. 11871089), and the Fundamental Research Funds for the Central Universities (Grant Nos. ZG216S2005).
\bibliographystyle{IEEEtran}
\bibliography{ref}
\section{Appendix}
\subsection{The proof of Theorem \ref{meofw}}\label{A}

\textit{\indent For a three-qubit pure state, the W state reaches the upper bound of multi-linear monogamy relation in terms of EoF.}\\
\\	\begin{proof}
	Here we denote that 
	\begin{align}
	f(x)=&h(\frac{1+\sqrt{1-x}}{2})\nonumber\\
	x=&C^2_{AB}\nonumber\\
	y=&C^2_{AC}+C^2_{AB},\nonumber\\
	c=&C^2_{AC}+C^2_{AB}+C^2_{BC},\nonumber\\
	g(x,y)=&f(x)+f(y-x)+f(c-y),\label{g}\\
	\frac{\partial g}{\partial x}=& f^{'}(x)-f^{'}(y-x)=0,\nonumber\\\frac{\partial g}{\partial y}=&f^{'}(y-x)-f^{'}(c-y)=0.\label{pf}
	\end{align}
	As $f^{''}(x)<0,$ $f^{'}(x)$ is monotonously decreasing  \cite{bai2014hierarchical},  $-f^{'}(y-x)$ is also monotonously decreasing in terms of $x$, and by the equality (\ref{pf}), we have 
		\begin{align}
		C^2_{AB}=C^2_{AC}=C^2_{BC} \label{fc2} 
	\end{align}
	is the only case when the $(\ref{pf})$ is valid.
	Furthermore, as $f(x)$ is a monotonic function \cite{bai2014hierarchical}, we have when $(\ref{fc2})$ is valid, $E(\rho_{AB})+E(\rho_{AC})+E(\rho_{BC})$ achieves the upper bound for a three-qubit pure state.\\
	\indent From \cite{acin2000generalized}, we have that a three-qubit pure state $\ket{\psi}_{ABC}$ can be written in the generalized Schmidt decomposition:
	\begin{align}
	\ket{\psi}=l_0\ket{000}+l_1e^{i\theta}\ket{100}+l_2\ket{101}+l_3\ket{110}+l_4\ket{111},
	\end{align}
	here $\theta\in [0,\pi),l_i\ge 0,i=0,1,2,3,4,\sum_{i=0}^4 l_i^2=1.$ From simple computation, we have
	\begin{align} 
	C_{AB}^2=&4l_0^2l_2^2,C_{AC}^2=4l_0^2l_3^2,\\
	C_{BC}^2=&4l_2^2l_3^2+4l_1^2l_4^2-8l_1l_2l_3l_4cos\theta. \label{cc3}
	\end{align}
	As $f(x)$ is monotone \cite{bai2014hierarchical}, then we only need to obtain the maximum of $4l_0^2l_2^2$ by using the Lagrange multiplier,
	\begin{align}
	&m(l_0,l_1,l_2,l_4,\lambda,\mu)=4l_0^2l_2^2+\lambda(l_0^2+l_1^2+2l_2^2+l_4^2-1),\nonumber\\\hspace{4mm}&+\mu(l_0^2l_2^2-l_2^4-l_1^2l_4^2+2l_1l_2^2l_4\cos\theta),\label{me}\\
	&\frac{\partial m}{\partial l_0}=8l_0l_2^2+2\lambda l_0+2\mu l_0l_2^2,\label{f1}\\
	&\frac{\partial m}{\partial l_1}=2\lambda l_1+2\mu l_2^2l_4\cos\theta-2\mu l_1l_4^2,\label{f2}\\
	&\frac{\partial m}{\partial l_2}=8l_0^2l_2+4\lambda l_2+2\mu l_0^2l_2-4\mu l_2^3+4\mu l_1l_2l_4\cos\theta,\label{f3}\\
	&\frac{\partial m}{\partial l_4}=2\lambda l_4-2\mu l_1^2l_4+2\mu l_1l_2^2\cos\theta,\label{f4}\\
	&\frac{\partial m}{\partial \theta}=-2\mu l_1l_2^2l_4\sin\theta,\label{f5}\\
	&\frac{\partial m}{\partial \lambda}=l_0^2+l_1^2+2l_2^2+l_4^2-1,\label{f6}\\
	&\frac{\partial m}{\partial \mu}=l_0^2l_2^2-l_2^4-l_1^2l_4^2+2l_1l_2^2l_4\cos\theta , \label{f7}
	\end{align}
	when the fomulas $(\ref{f1})-(\ref{f7})$ equal to 0, we have $l_0=l_2=l_3=\frac{1}{\sqrt{3}},l_1=l_4=0$ is the only case when $C^2(\rho_{AB})$ attains the maximum, that is, 
	\begin{align} \hspace{2mm}&\max_{\ket{\psi}_{ABC}}(E(\rho_{AB})+E(\rho_{AC})+E(\rho_{BC}))\nonumber\\:=&c_{max}=3h(\frac{1}{2}+\frac{\sqrt{5}}{6}).\label{mp}
	\end{align}
	\par When computing $(\ref{f1})-(\ref{f7})$ equal to 0, according to $(\ref{f5})$, we have that at least one of the equalities in the set $\{\mu=0,l_1=0,l_2=0,l_4=0, \sin\theta=0\}$ is valid.  Then by using the method of exclusion, we could get the result. By the way, in the method of exclusion, we mainly use that when $\ket{\psi}$ is separable, the function $ME(\cdot)$ cannot get the maximum. \\
	\indent When we take the operation $\sigma_x$ on the first system, we get the $W$ state $\frac{1}{\sqrt{3}}(\ket{001}+\ket{010}+\ket{100}).$\\
\end{proof}
\subsection{The proof of Lemma \ref{mulc}}\label{B}
\textit{\indent	Up to the local unitary transformations, the W state is the unique state that can reach the upper bound in terms of the function $MC(\psi)=C_{AB}+C_{BC}+C_{AC}$ for a three-qubit pure state.}\\
\begin{proof}
	We'll compute the maximum of the six classes of a three-qubit state respectively according to \cite{allen2017entanglement}.\\
	\indent $Case$ $i$: When $\ket{\psi}_{ABC}$ is  $A-B-C$, $MC(\psi)=0$.\\
	\indent $Case$ $ii$: When $\ket{\psi}_{ABC}$ is biseparable, if $\ket{\psi}_{ABC}=\ket{\phi_1}_A\otimes \ket{\phi_2}_{BC}$, $C_{AB}=C_{AC}=0$, $MC(\psi)=C_{BC}\le 1$, the other cases are similar. \\
	\indent $Case$ $iii$: When $\ket{\psi}_{ABC}$ belongs to the $W$ class, according to the fomula $(\ref{LUW}),$  $MC(\psi)=2r_2r_3+2r_1r_3+2r_1r_2$. Trivially, when $r_1=r_2=r_3$, that is, $\ket{\psi}_{ABC}=\ket{W}=\frac{1}{\sqrt{3}}(\ket{001}+\ket{010}+\ket{110})$, $MC(\psi)$ gets the maximum.\\
	\indent $Case$ $iv$: When $\ket{\psi}_{ABC}$ belongs to the  $GHZ$ class,  according to \cite{allen2017entanglement},  assume $\ket{\psi}=M_1\otimes M_2\otimes M_3\ket{GHZ}$, $M_i=\big(\vec{u_i},\vec{v_i}\big)$,  $\vec{u_i}=(u_i\cos\theta_i,u_i\sin\theta_i)^{T}$,  $\vec{v_i}=(v_i\cos(\phi_i+\theta_i),v_i\sin(\phi_i+\theta_i))^{T}$,
	\begin{align}\label{s1}
	MC(\psi)=\frac{|c_1s_2s_3|+|c_2s_1s_3|+|c_3s_1s_2|}{r+c_1c_2c_3},
	\end{align}
	here we denote $c_i=\cos\phi_i,s_i=\sin\phi_i,i=1,2,3$, $2r=\frac{u_1u_2u_3}{v_1v_2v_3}+\frac{v_1v_2v_3}{u_1u_2u_3}$,  $r\ge 1$.  In order to let $MC$  be the maximum,  assume $r=1$.   When $c_i<0$,  $MC$ will gets maximum.  then let $\phi_i\in [0,\frac{\pi}{2}],MC(\psi)=\frac{c_1s_2s_3+c_2s_1s_3+c_3s_1s_2}{1-c_1c_2c_3}$. Next we will prove $MC(\psi)\le 2$,  first we define $l(c_1,c_2,c_3)$ as follows,
	\begin{align*}
	&l(c_1,c_2,c_3)\\=&c_1\cos(\phi_2-\phi_3)+c_2\cos(\phi_1-\phi_3)+c_3\cos(\phi_1-\phi_2)\\
	-&c_1c_2c_3,
	\end{align*}
	assume $c_1>c_2\ge c_3$, then we obtain
	\begin{align*}
	l(c_1,c_2,c_3)\le l(c_1,c_1,c_1).
	\end{align*}
	When $c_1=c_2=c_3$,  the function $l(c_1,c_1,c_1)$ is a monotonic function of $c_1$. When $c_1\rightarrow 1$, the function $l(c_1,c_1,c_1)$ gets the maximum, that is, $MC(\psi)\rightarrow 2$.  Then we prove that if $\ket{\psi}_{ABC}$ is a GHZ class state, then $MC(\ket{\psi}_{ABC})\le 2.$ 
	\indent However, from the above analysis, when $MC(\ket{\psi}_{ABC})=2,$ we have $c_1=c_2=c_3\rightarrow 1,$ that is, the matrix $M_i$ is sigular, this is impossible.

\end{proof}
\subsection{The proof of Theorem \ref{mulcm}}\label{C}
\indent 	\textit{Up to the local unitary transformations, the W state is the unique state that can reach the upper bound in terms of the function $MC(\cdot)$ for a three-qubit state.}\\
\begin{proof}
	Combining with Lemma \ref{mulc}, we only need to present that the mixed states cannot reach the upper bound of the multi-linear monogamy relations in terms of concurrence. \\
	\indent Due to Lemma \ref{mulc}, for a three-qubit pure state $\ket{\psi},$  $MC(\ket{\psi})$ gets the maximum, only when $\ket{\psi}$ is LU equivalent to $\ket{W}$. Assume $\rho$ is a three-qubit mixed system, $\{(p_i,\ket{\phi_i})|i=1,2,\cdots,k\}$ is an optimal decompostion of $\rho$, we can always assume $k=2.$ For the cases when $k>2,$ we can prove similarly. As $\ket{\phi_i}$ is LU equivalent to $\ket{W}$, we can always assume $\{(p_1,\ket{W}),(p_2,U_1\otimes U_2\otimes U_3\ket{W})\}$ is a decompostion of $\rho,$ then
	\begin{align}
	MC(\rho)=&C(\rho_{AB})+C(\rho_{AC})+C(\rho_{BC})\nonumber\\
	=&C(\sigma_{1})+C(\sigma_2)+C(\sigma_3)\nonumber\\
	\end{align}
	here we assume 
	\begin{align*}
	\sigma_{1}=&\frac{p_1}{3}(\ket{00}\bra{00}+2\ket{\phi^{+}}\bra{\phi^{+}})+\frac{p_2}{3}\tau_1,\\
	\sigma_{2}=&\frac{p_1}{3}[\ket{00}\bra{00}+2\ket{\phi^{+}}\bra{\phi^{+}}]+\frac{p_2}{3}\tau_2,\\ 
	\sigma_{3}=&\frac{p_1}{3}[\ket{00}\bra{00}+2\ket{\phi^{+}}\bra{\phi^{+}}]+\frac{p_2}{3}\tau_3,\\
	\ket{\phi^{+}}=&\frac{1}{\sqrt{2}}(\ket{01}+\ket{10}),
	\end{align*}
	here we denote $\tau_1=[(U_1\otimes U_2)\ket{00}\bra{00}(U_1\otimes U_2)^{\dagger}+2(U_1\otimes U_2)\ket{\phi^{+}}\bra{\phi^{+}}(U_1\otimes U_2)^{\dagger}],$ $\tau_2=[(U_1\otimes U_3)\ket{00}\bra{00}(U_1\otimes U_3)^{\dagger}+2(U_1\otimes U_3)\ket{\phi^{+}}\bra{\phi^{+}}(U_1\otimes U_3)^{\dagger}],$ $\tau_3=[(U_2\otimes U_3)\ket{00}\bra{00}(U_2\otimes U_3)^{\dagger}+2(U_2\otimes U_3)\ket{\phi^{+}}\bra{\phi^{+}}(U_2\otimes U_3)^{\dagger}],$
	then we have 
	\begin{align}
	C(\sigma_{1})\le \frac{2}{3}C(p_1\ket{\phi^{+}}\bra{\phi^{+}}+p_2(U_1\otimes U_2)\ket{\phi^{+}}\bra{\phi^{+}}(U_1\otimes U_2)^{\dagger}),
	\end{align}
By the Lemma \ref{SD}, \ref{rank1} and \ref{rank2}, we have $(U_1\otimes U_2)\ket{\phi^{+}}=e^{ix}\ket{\phi^{+}}$ is a sufficient and necessary condition of $C(p_1\ket{\phi^{+}}\bra{\phi^{+}}+p_2(U_1\otimes U_2)\ket{\phi^{+}}\bra{\phi^{+}}(U_1\otimes U_2)^{\dagger})=1,$ here $e^{ix}$ is a global phase factor.  Then we can get the similar result for $\sigma_{2}$ and $\sigma_3.$ As $Sr(\ket{01})=Sr(\ket{10})=1$, here we denote that $Sr(\cdot)$ is the schmidt rank, then we have $U_i=\left(\begin{array}{cc}
	e^{i\theta_i}&0\\
	0&1
	\end{array}\right),i=1,2,3$, that is, $\rho=\ket{W}\bra{W}.$ Then we finish the proof.\\
\end{proof}

\indent Here we will prove that $(U_1\otimes U_2)\ket{\phi^{+}}=e^{ix}\ket{\phi^{+}}$ is a sufficient and necessary condition of $C(p_1\ket{\phi^{+}}\bra{\phi^{+}}+p_2(U_1\otimes U_2)\ket{\phi^{+}}\bra{\phi^{+}}(U_1\otimes U_2)^{\dagger})=1.$ As $\Longrightarrow$ is trivial, \\
\indent $\Longleftarrow:$ First we will present a lemma.
\begin{lemma}\label{SD}
	Assume $A, B\in Pos(\mathbb{H})$,  here we denote that $Pos(\mathbb{H})$ is a linear space consisting of all the semidefinite positive operators of a bounded Hilbert space $\mathbb{H}$. Here we denote that $Eig(A)$ and $Eig(B)$ are two sets consisting of all the eigenvalues of the matrix $A$ and $B$ respectively. If the biggest elements in the set $Eig(A)$ and $Eig(B)$ are 1 or less, then the biggest element in the set $Eig(AB)$ is 1 or less.
\end{lemma}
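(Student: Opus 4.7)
My plan is to reduce the eigenvalue question about the non-Hermitian product $AB$ to an operator-norm question about a Hermitian positive semidefinite matrix, and then invoke submultiplicativity of the operator norm.

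First I would observe that, although $AB$ need not be Hermitian, the spectrum of $AB$ coincides with the spectrum of the similar matrix $A^{1/2}(AB)A^{-1/2} = A^{1/2}BA^{1/2}$ when $A$ is invertible; by a continuity/perturbation argument (replacing $A$ by $A+\varepsilon I$ and letting $\varepsilon\to 0^+$) the same identity of spectra holds in general up to the zero eigenvalues that $AB$ may acquire from the kernel of $A$. In particular
\begin{equation*}
\lambda_{\max}(AB)=\lambda_{\max}(A^{1/2}BA^{1/2}).
\end{equation*}
The right-hand side is the largest eigenvalue of a Hermitian positive semidefinite matrix, hence equals its operator norm $\|A^{1/2}BA^{1/2}\|$.

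Next I would use submultiplicativity of the operator norm together with $\|A^{1/2}\|^2=\|A\|$ to get
\begin{equation*}
\|A^{1/2}BA^{1/2}\|\le \|A^{1/2}\|\,\|B\|\,\|A^{1/2}\| = \|A\|\,\|B\|.
\end{equation*}
Since for positive semidefinite operators the operator norm equals the largest eigenvalue, the hypothesis that the maximum elements of $Eig(A)$ and $Eig(B)$ are at most $1$ gives $\|A\|\le 1$ and $\|B\|\le 1$. Combining the displayed inequalities yields $\lambda_{\max}(AB)\le 1$, which is the claim.

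The main obstacle I anticipate is the subtlety that $AB$ is not Hermitian, so one must justify carefully that its eigenvalues are real and nonnegative and agree with those of $A^{1/2}BA^{1/2}$; once that standard similarity/continuity step is in place, the rest is a one-line operator-norm estimate. An alternative route, if the similarity argument is felt to be too heavy, would be to argue directly: for any (generalized) eigenvector $v$ with $ABv=\lambda v$, set $w=B^{1/2}v$ and note $B^{1/2}AB^{1/2}w=\lambda w$, so $\lambda\in Eig(B^{1/2}AB^{1/2})$ is real and bounded by $\|B^{1/2}AB^{1/2}\|\le\|A\|\,\|B\|\le 1$; this avoids invertibility issues entirely and may be the cleanest way to present the proof.
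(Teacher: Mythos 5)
Your proof is correct, and it takes a genuinely different route from the paper's. The paper expands an eigenvector $\ket{\gamma_k}$ of $AB$ first in the eigenbasis of $B$ and then in that of $A$, arriving at $\chi_k=\bigl(\sum_{ij}\lambda_i\mu_j x_{jk}y_{ij}\bigr)/\bigl(\sum_{ij}x_{jk}y_{ij}\bigr)$ and asserting this is at most $1$ because each $\lambda_i\le 1$ and $\mu_j\le 1$. That is a weighted-average argument whose ``weights'' $x_{jk}y_{ij}$ are in general complex and need not be nonnegative, so the paper's final inequality is not actually justified as written; it is at best heuristic. Your argument --- conjugating to the Hermitian matrix $A^{1/2}BA^{1/2}$ (or, in your alternative, $B^{1/2}AB^{1/2}$), using that $XY$ and $YX$ share nonzero spectrum to dispose of the non-invertible case, and then applying submultiplicativity of the operator norm --- is rigorous, standard, and closes exactly the gap the paper leaves open. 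Your second, direct route via $w=B^{1/2}v$ is the cleanest: if $w=0$ then $Bv=0$ forces $\lambda=0$, and otherwise $\lambda\in Eig(B^{1/2}AB^{1/2})$ with $\lambda\le\|A\|\,\|B\|\le 1$. I would adopt that version; the only thing it costs relative to the paper's attempt is the (entirely standard) use of the positive square root.
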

\begin{proof}
	Assume that the eigenvalues of $A$ are $\lambda_i$ with its eigenvector $\ket{\alpha_i}$, the eigenvalues of $B$ are $\mu_j$ with its eigenvector $\ket{\beta_j}$, and the eigenvalues of $AB$ are $\chi_k$ with its eigenvectors $\ket{\gamma_k}$.  Here we always assume that the range of $A$ and $B$ are nonsingular, then we have
	\begin{align}
	AB\ket{\gamma_k}=&AB\sum_{j}x_{jk}\ket{\beta_j}\nonumber\\
	=&A\sum_{j}\mu_j x_{jk}\ket{\beta_j}\nonumber\\
	=&\sum_{ij}\lambda_i\mu_j x_{jk}y_{ij}\ket{\alpha_i},\label{bet}\\
	AB\ket{\gamma_k}=&\chi_k\ket{\gamma_k}\nonumber\\
	=&\chi_k\sum_{ij}x_{jk}y_{ij}\ket{\alpha_i} ,\label{gamma}
	\end{align}
	in the formula $(\ref{bet}),$ we denote that $\ket{\gamma_k}=\sum_j x_{jk}\ket{\beta_j}$ and $\ket{\beta_j}=\sum_i y_{ij}\ket{\alpha_i}.$  From the equality $(\ref{bet})$ and $(\ref{gamma}),$ we have
	\begin{align}
\chi_k=\frac{\sum_{ij}\lambda_i\mu_j x_{jk}y_{ij}}{\sum_{ij}x_{jk}y_{ij}}\le 1. \label{chi}
\end{align}
here the $(\ref{chi})$ is due to  $\lambda_i\le 1$ and $\mu_j\le 1$. Then we finish the proof.
\end{proof}\par
\indent As $\rho$ is semidefinite positive , then $(\sigma_y\otimes \sigma_y)\overline{\rho}(\sigma_y\otimes\sigma_y)$ is semidifinite positive, then due to the Lemma \ref{SD}, we have that all the eigenvalues of $\rho(\sigma_y\otimes \sigma_y)\overline{\rho}(\sigma_y\otimes\sigma_y)$ are 1 or less. Then due to $(\ref{C2}),$ we have that only when $Rank(\rho(\sigma_y\otimes \sigma_y)\overline{\rho}(\sigma_y\otimes\sigma_y))=1$ and $\lambda_1$ in $(\ref{C2})$ equals to 1 can $C(\rho)=1.$ As $\sigma_y\otimes \sigma_y$ is nonsigular, we only need $Rank(\rho)=1.$ \\
\begin{lemma} \label{rank1}
	$Rank[p_1\ket{\phi^{+}}\bra{\phi^{+}}+p_2(U_1\otimes U_2)\ket{\phi^{+}}\bra{\phi^{+}}(U_1\otimes U_2)^{\dagger}]=1$ if and only if $(U_1\otimes U_2)\ket{\phi^{+}}=e^{ix}\ket{\phi^{+}},$ here $e^{ix}$ is a global phase factor. 
	\end{lemma}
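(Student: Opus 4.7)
The plan is to reduce the rank-$1$ condition on a convex combination of two rank-$1$ positive operators to a statement about the span of the two underlying vectors. Throughout I assume $p_1,p_2>0$; otherwise $\rho$ is literally one of the two projectors and both directions are immediate.

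First I would observe that for any unit vectors $\ket{\psi_1},\ket{\psi_2}$ and positive weights $p_1,p_2>0$, the range of
\[
M \;=\; p_1\ket{\psi_1}\bra{\psi_1}+p_2\ket{\psi_2}\bra{\psi_2}
\]
coincides with $\lin\{\ket{\psi_1},\ket{\psi_2}\}$. Indeed $M\ket{\psi_i}$ always has a nonzero component along $\ket{\psi_i}$ as long as $p_i>0$, so both vectors lie in the range; conversely the image of $M$ is clearly contained in that span. Hence $\rank M=1$ if and only if $\ket{\psi_1}$ and $\ket{\psi_2}$ are linearly dependent, and since both are unit vectors this happens precisely when $\ket{\psi_2}=e^{ix}\ket{\psi_1}$ for some global phase.

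Applied to the present situation with $\ket{\psi_1}=\ket{\phi^{+}}$ and $\ket{\psi_2}=(U_1\otimes U_2)\ket{\phi^{+}}$, this gives both implications of the lemma at once. The direction $\Longleftarrow$ is trivial: if $(U_1\otimes U_2)\ket{\phi^{+}}=e^{ix}\ket{\phi^{+}}$, the two projectors coincide and the sum is the single projector $(p_1+p_2)\proj{\phi^{+}}$, which has rank $1$. For $\Longrightarrow$, assuming $\rank[p_1\proj{\phi^{+}}+p_2(U_1\otimes U_2)\proj{\phi^{+}}(U_1\otimes U_2)^{\dagger}]=1$, the general observation above forces $\lin\{\ket{\phi^{+}},(U_1\otimes U_2)\ket{\phi^{+}}\}$ to be one-dimensional, hence the unit vectors differ by a phase.

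There is essentially no obstacle; the only bookkeeping point is to exclude the degenerate cases $p_1=0$ or $p_2=0$ at the start (where the claim either holds trivially or is vacuous with respect to $U_1\otimes U_2$), so that the range identification $\mathrm{range}(M)=\lin\{\ket{\psi_1},\ket{\psi_2}\}$ is valid. After this lemma, the chain of reasoning in the proof of Theorem~\ref{mulcm} closes: one reads off that $U_1\otimes U_2$ stabilizes $\ket{\phi^{+}}$ up to a phase, and analogous conclusions for $U_1\otimes U_3$ and $U_2\otimes U_3$ then pin down each $U_i$ to the diagonal form displayed there, giving $\rho=\proj{W}$.
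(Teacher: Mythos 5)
Your proposal is correct and follows essentially the same route as the paper: both arguments reduce the rank-one condition to linear (in)dependence of $\ket{\phi^{+}}$ and $(U_1\otimes U_2)\ket{\phi^{+}}$, the paper by showing the kernel of $\sigma$ is exactly the orthogonal complement of their span, you by identifying the range of $\sigma$ with that span. The only cosmetic point is that your one-line justification that each $\ket{\psi_i}$ lies in the range could be tightened (e.g.\ via $\ker(A+B)=\ker A\cap\ker B$ for positive operators), but this is at the same level of rigor as the paper's own argument.
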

\begin{proof}Here we denote $\sigma=p_1\ket{\phi^{+}}\bra{\phi^{+}}+p_2(U_1\otimes U_2)\ket{\phi^{+}}\bra{\phi^{+}}(U_1\otimes U_2)^{\dagger}.$  If $(U_1\otimes U_2)\ket{\phi^{+}}\ne e^{ix}\ket{\phi^{+}},$ then $(U_1\otimes U_2)\ket{\phi^{+}}$ and $\ket{\phi^{+}}$ are linear independent,  $dim(span\{\ket{\phi^{+}},(U_1\otimes U_2)\ket{\phi^{+}}\})=2,$ $dim(span\{\ket{\phi^{+}},(U_1\otimes U_2)\ket{\phi^{+}}\}^{\perp}=n-2.$ As $\forall \ket{\alpha}\in dim(span\{\ket{\phi^{+}},(U_1\otimes U_2)\ket{\phi^{+}}\}^{\perp},$ $\sigma\ket{\alpha}=0,$ that is, $Rank(\sigma)\le 2.$  As we cannot find a nontrivial  vector $\ket{\beta}$ in the subspace ${span\{\ket{\phi^{+}},(U_1\otimes U_2)\ket{\phi^{+}}\}}$ such that $\sigma\ket{\beta}=0,$ then we finish the proof.
\end{proof}

\begin{lemma}\label{rank2}
	Assume $\theta=p_1\ket{\phi^{+}}\bra{\phi^{+}}+p_2U_1\otimes U_2)\ket{\phi^{+}}\bra{\phi^{+}}(U_1\otimes U_2)^{\dagger}$ with $Rank(\theta)=2,$  then $Rank[(\sigma_y\otimes \sigma_y)\tilde{\theta}(\sigma_y\otimes \sigma_y)\theta]=2.$
	\end{lemma}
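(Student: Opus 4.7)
The plan is to reduce the target matrix to $\theta^{2}$ by showing that the spin-flipped operator $(\sigma_{y}\otimes\sigma_{y})\overline{\theta}(\sigma_{y}\otimes\sigma_{y})$ equals $\theta$ itself. Once that identity is in place, the rank statement follows at once from the positive semidefiniteness of $\theta$.

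First I would observe that both pure-state components of $\theta$ are maximally entangled two-qubit states: $\ket{\phi^{+}}$ manifestly so, and $(U_{1}\otimes U_{2})\ket{\phi^{+}}$ because local unitaries preserve entanglement. Then I would invoke the Wootters identity $C(\ket{\psi})=|\braket{\psi}{\tilde{\psi}}|$, with $\ket{\tilde{\psi}}:=(\sigma_{y}\otimes\sigma_{y})\ket{\psi^{*}}$. For each maximally entangled $\ket{\psi}$ the left-hand side equals $1$; since $\ket{\psi}$ and $\ket{\tilde{\psi}}$ are both unit vectors, Cauchy--Schwarz saturation forces $\ket{\tilde{\psi}}=e^{i\phi_{\psi}}\ket{\psi}$ for some global phase $\phi_{\psi}$. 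The phase then cancels in the outer product, giving $(\sigma_{y}\otimes\sigma_{y})\overline{\ket{\psi}\bra{\psi}}(\sigma_{y}\otimes\sigma_{y})=\ket{\psi}\bra{\psi}$ for each of the two components. Summing with weights $p_{1},p_{2}$ produces the desired identity $(\sigma_{y}\otimes\sigma_{y})\overline{\theta}(\sigma_{y}\otimes\sigma_{y})=\theta$.

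With that identity, the matrix in the statement becomes $\theta\cdot\theta=\theta^{2}$. Since $\theta$ is positive semidefinite with rank $2$, its square has the same eigenvectors with squared (hence still nonzero) eigenvalues, so $Rank(\theta^{2})=Rank(\theta)=2$, which is exactly the claim.

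The only real subtlety I foresee is the phase bookkeeping inside the spin-flip operation, but it is handled entirely by Cauchy--Schwarz saturation on maximally entangled states; no identity of the form $\sigma_{y}\overline{U}\sigma_{y}=\det(U)^{*}U^{T}$ ever has to be written out. The rank-$2$ hypothesis on $\theta$ enters only at the last step, to rule out the degenerate case $(U_{1}\otimes U_{2})\ket{\phi^{+}}\propto\ket{\phi^{+}}$ that would collapse $\theta$ to rank $1$.
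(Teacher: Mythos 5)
Your proof is correct, and it takes a genuinely different route from the paper's. The paper argues purely with rank manipulations: it strips off the invertible factor $\sigma_y\otimes\sigma_y$, cyclically permutes to get $\rank[\theta\tilde{\theta}(\sigma_y\otimes\sigma_y)]$, and then claims $\rank(\theta\theta^{T})=\rank(\theta)$ using $\theta=\theta^{\dagger}$. You instead exploit the specific structure of $\theta$: both of its pure components are maximally entangled, so by Wootters' pure-state formula $C(\psi)=|\braket{\psi}{\tilde{\psi}}|=1$ and Cauchy--Schwarz saturation each projector is invariant under the spin flip, whence $(\sigma_y\otimes\sigma_y)\overline{\theta}(\sigma_y\otimes\sigma_y)=\theta$ and the target matrix is just $\theta^{2}$, manifestly of rank $2$ for a positive semidefinite rank-$2$ operator. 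Your route is arguably the more robust one: the paper's cyclic-permutation step ($\rank(AB)=\rank(BA)$ with neither factor invertible) and its final step ($\rank(\theta\overline{\theta})=\rank(\theta)$ for a complex Hermitian $\theta$) are not valid identities in general --- for instance $\theta=\ketbra{v}{v}$ with $\ket{v}=(\ket{0}+i\ket{1})/\sqrt{2}$ has $\theta\overline{\theta}=0$ --- and only hold here precisely because of the spin-flip invariance that you establish explicitly. What the paper's argument buys, when its steps are justified, is independence from the maximal-entanglement structure of the components; what yours buys is a complete and self-contained justification tailored to the operator actually at hand, at the cost of using that structure essentially. The only point worth flagging is notational: you read $\tilde{\theta}$ in the lemma as the bare complex conjugate $\overline{\theta}$ rather than the full Wootters flip; this matches the paper's own usage (it writes $\rank[\theta\tilde{\theta}]=\rank[\theta\theta^{T}]$), so your reading is the intended one.
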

\begin{proof}
	As $\sigma_y\otimes\sigma_y$ is invertible, we only need to prove $Rank[\tilde{\theta}(\sigma_y\otimes \sigma_y)\theta]=2.$ As $Rank[\tilde{\theta}(\sigma_y\otimes \sigma_y)\theta]=Rank[\theta\tilde{\theta}(\sigma_y\otimes \sigma_y)],$ then if we can prove $Rank[\theta\tilde{\theta}]=2,$ we finish the proof. As  $\theta=\theta^{\dagger},$ then $Rank[\theta\tilde{\theta}]=Rank[\theta\theta^{T}]=Rank(\theta)=2.$ 
\end{proof}
\subsection{The proof of Theorem \ref{6}}\label{D}
\indent \textit{Assume $\ket{\psi}$ is a three-qubit pure state, then $MC(\ket{\psi})=0$ if and only if $\ket{\psi}$ can be represented as $\ket{\psi}=r_0\ket{000}+r_1\ket{111}$ up to local unitary operations when $0\le r_0,r_1\le 1.$}\\
\begin{proof}
	First we recall $MC(\ket{\psi}_{ABC})=C^2_{AB}+C_{AC}^2+C_{BC}^2.$\\
	{$\Leftarrow:$ When $\ket{\psi}=r_0\ket{000}+r_1\ket{111}$, then $\rho_{AB}=\rho_{AC}=\rho_{BC}=r^2_0\ket{00}\bra{00}+r_1^2\ket{11}\bra{11}$, $C(\rho_{AB})=C(\rho_{AC})=C(\rho_{BC})=0,$ that is, $MC(\ket{\psi})=0.$\\
	$\Rightarrow:$ In the proof of theroem \ref{meofw}, we present that for a three-qubit pure state $\ket{\psi}_{ABC}=l_0\ket{000}+l_1e^{i\theta}\ket{100}+l_2\ket{101}+l_3\ket{110}+l_4\ket{111}$, here $l_i\ge 0,$ $i=0,1,2,3,4,$ $\theta\in[0,\pi),$
	\begin{align}
	C_{AB}^2=&4l_0^2l_2^2,\hspace{2mm} C_{AC}^2=4l_0^2l_3^2,\nonumber\\
	C_{BC}^2=&4l_2^2l_3^2+4l_1^2l_4^2-8l_1l_2l_3l_4\cos\theta. \label{ccc}
	\end{align}
	When $MC(\ket{\psi})=0,$ we have that 
	\begin{align}
	l_0l_2=l_0l_3=0,\nonumber\\
	l_2^2l_3^2+l_1^2l_4^2=2l_1l_2l_3l_4\cos\theta.
	\end{align}
	When $l_0\ne 0,$ we have $l_2=l_3=0,$ that is, $l_1l_4=0,$  then $\ket{\psi}=l_0\ket{000}+l_4\ket{111},$ or $\ket{\psi}=(l_0\ket{0}+l_1e^{i\theta}\ket{1})\ket{00}.$ When $\ket{\psi}$ is the second state, let $U_1$ be a unitary on the first system such that $U_1(\ket{l_0\ket{0}+l_1e^{i\theta}\ket{1}})=\ket{0},$ then the second state is LU equivalent to the state $\ket{000}.$ Below we denote $U_i$ be a unitary on the $i$-th system, $i=1,2,3$.\\
\par
 When $l_0=0,$ then from the third formula in (\ref{ccc}), 
\begin{align}
(l_2l_3\cos\theta-l_1l_4)^2+l_2^2l_3^2(\sin\theta)^2=0,
\end{align}
that is, $l_2l_3\cos\theta=l_1l_4,$ $l_2l_3\sin\theta=0.$ When $\sin\theta=0$ and $l_i\ge 0,$ $i=1,2,3,4,$ $\cos\theta=1.$ That is, $(l_1,l_2)=a(l_3,l_4),$ then $\ket{\psi}=\ket{1}(\ket{0}+a\ket{1})(l_1\ket{0}+l_2\ket{1}).
$ When $U_2(\ket{0}+a\ket{1})=\sqrt{1+a^2}\ket{0},$ $U_3(l_1\ket{0}+l_2\ket{1})=\sqrt{l_1^2+l_2^2}\ket{0},$ then we obtain that the above state is LU equivalent to $\ket{000}.$ \hspace{3mm} When $l_2=0,$ then $l_1l_4=0.$ If $l_1=0,$ then $\ket{\psi}$ can be represented as $\ket{\phi_1}=l_3\ket{110}+l_4\ket{111}.$ Let $U_3(l_3\ket{0}+l_4\ket{1})=\ket{1}$, then it is LU equivalent to the state $\ket{111},$ If $l_4=0,$ then $\ket{\psi}$ can be represented as $\ket{\phi_2}=e^{i\theta}l_1\ket{100}+l_3\ket{110}$. Let $U_2(e^{i\theta}l_1\ket{0}+l_3\ket{1})=\ket{1}$ and $U_3=\sigma_X$, then it is LU equivalent to the state $\ket{111}.$ \indent The case when $l_3=0$ is similar to the case when $l_2=0.$  Then we finish the proof. }
	\end{proof}
\subsection{The proof of Corollary \ref{r1}}\label{E}
\textit{The sole class of AMES $\ket{\psi}$ in an $3$-qubit system are the states that are LU equivalent to $\ket{GHZ}=\frac{1}{\sqrt{2}}(\ket{000}+\ket{111}).$ }\par
\begin{proof}
	First we provide two methods to prove that $\rho_{AB}$, $\rho_{AC}$ and $\rho_{BC}$ are separable.
	Assume $\ket{\psi}_{ABC}$ is an AMES state in an $3$-qubit state, then $$\rho_A=\rho_B=\rho_C=\frac{I}{2}.$$ Next as $\rho_A=\frac{I}{2},$ then any purification state $\ket{\phi}_{AB^{'}}$ of $\rho_A$ can be written as $$\ket{\phi}_{ABC}=(I_A\otimes U_{BC})\frac{(\ket{00}+\ket{11})\ket{0}}{\sqrt{2}},$$
	here $U_{BC}$ is a unitary operator, then we have 
	\begin{align}
	\rho_{BC}=\ket{\phi_1}\bra{\phi_1}+\ket{\phi_2}\bra{\phi_2},\nonumber\\
	\ket{\phi_1}=U_{BC}\ket{00},\hspace{3mm} \ket{\phi_2}=U_{BC}\ket{10}
	\end{align} 
	that is, $r(\rho_{BC})=2,$ then due to Theorem 1 in \cite{kraus2000separability}, we have $\rho_{BC}$ is separable. Similarly, we have $\rho_{AB}$ and $\rho_{AC}$ is separable.\par
	Here we provide the other method to prove that $\rho_{AB},$ $\rho_{AC}$ and $\rho_{BC}$ are separable. As $\rho_A=\rho_B=\frac{I}{2},$ then from \cite{kraus2010local}, we have
	\begin{align}
	\rho_{AB}=&\lambda_1\Psi^{+}+\lambda_2\Psi^{-}+\lambda_3\Phi^{+}+\lambda_4\Phi^{-},\\
	\Psi^{+}=&\frac{1}{2}(\ket{00}+\ket{11})(\bra{11}+\bra{00}),\nonumber\\
	\Psi^{-}=&\frac{1}{2}(\ket{00}-\ket{11})(\ket{00}-\bra{11}),\nonumber\\
	\Phi^{+}=&\frac{1}{2}(\ket{01}+\bra{10})(\bra{01}+\bra{10}),\nonumber\\
	\Phi^{-}=&\frac{1}{2}(\ket{01}-\ket{10})(\bra{01}-\bra{10})\nonumber.
	\end{align}
	As $\rho_C$ and $\rho_{AB}$ are with the same spectrum, then we have only two of $\lambda_i,$ $i=1,2,3,4$ are $\frac{1}{2}$. Then $\rho_{AB}$ is separable. Similarly, we have $\rho_{AC}$ and $\rho_{BC}$ are separable.\par
	As all of $\rho_{AB},$ $\rho_{AC}$ and $\rho_{BC}$ are separable, then we have
	\begin{align}
	MC(\ket{\psi})=0\nonumber,
	\end{align} 
	then from Theorem \ref{6}, we have $\ket{\psi}=r_0\ket{000}+r_1\ket{111}$ up to unitary operations when $r_0,r_1\in [0,1].$ As $\rho_A=\rho_B=\rho_C=\frac{1}{2},$ then $\ket{\psi}=\frac{1}{2}(\ket{000}+\ket{111})$ up to local unitary operations. 
\end{proof}
 	\end{document}